\documentclass{amsart}
\usepackage{amssymb, amsmath, color}
\usepackage[latin1]{inputenc}
\usepackage{graphicx}
\usepackage{float}
\usepackage{multirow}
\usepackage[ruled]{algorithm2e}
\usepackage{color}

\def\black{\color{black}}

\makeatletter
\@namedef{subjclassname@2020}{%
	\textup{2020} Mathematics Subject Classification}
\makeatother
\newtheorem{theorem}{Theorem}
\newtheorem{lemma}[theorem]{Lemma}
\newtheorem{remark}[theorem]{Remark}
\newtheorem{definition}[theorem]{Definition}
\newtheorem{corollary}[theorem]{Corollary}
\newtheorem{example}[theorem]{Example}


\begin{document}

\title[Combined quadrics]
{Contact detection between an ellipsoid and a combination of quadrics}
\author[Brozos \, Pereira-Sáez \, Rodríguez-Raposo \, Souto-Salorio \,  Tarrío-Tobar]{M. Brozos-V\'{a}zquez \, M.J. Pereira-Sáez \, A. B. Rodríguez-Raposo \, M.J. Souto-Salorio \, A. D. Tarrío-Tobar}
\address{MBV: Universidade da Coru\~na, Differential Geometry and its Applications Research Group,  Escola Polit\'ecnica Superior,  Spain}
\email{miguel.brozos.vazquez@udc.gal}
\address{MJPS: Universidade da Coru\~na, Differential Geometry and its Applications Research Group, Facultade de Econom\' ia e Empresa, Spain}
\email{maria.jose.pereira@udc.es}
\address{ABRR: Universidade de Santiago de Compostela, Departamento de Didácticas Específicas, Facultade de Ciencias da Educación, Spain}
\email{anabelen.rodriguez.raposo@usc.es}
\address{MJSS: Universidade da Coru\~na,  Differential Geometry and its Applications Research Group, Facultade de Inform\'atica, Spain}
\email{ maria.souto.salorio@udc.es}
\address{ADTT: Universidade da Coru\~na, Differential Geometry and its Applications Research Group, Escola Universitaria de Arquitectura Técnica, Spain}
\email{madorana@udc.es}
\thanks{Third and fourth authors partially supported by Agencia Estatal de Investigación (Spain), projects PID2020-115155GB-I00 and PID2020-113230RB-C21, respectively (European FEDER, UE)}
\subjclass[2020]{65D17, 68U05, 15A18.}
\keywords{Quadric, contact detection, relative position, characteristic polynomial.}

\begin{abstract}
We analyze the characteristic polynomial associated to an ellipsoid and another quadric in the context of the contact detection problem. We obtain a necessary and sufficient condition for an efficient method to detect contact. This condition is a feature on the size and the shape of the quadrics and can be checked directly from their parameters. Under this hypothesis, contact can be noticed by means of discriminants of the characteristic polynomial. Furthermore, relative positions can be classified through the sign of the coefficients of this polynomial.
\end{abstract}

\maketitle

\section{Introduction}\label{sect:introduction}

The contact detection problem between objects is recurrent in CAD/CAM. Many disciplines such as computer graphics, computer animation, robotics, industrial manufacturing or surgical simulation, among many others, require the detection of collisions between objects in many of their developments. Different surfaces have been used to model the great variety of shapes of the objects under consideration. Furthermore, depending on the chosen surfaces, appropriate methods based on their features are developed. For example, methods applied to polyhedra (see, for example, \cite{fengetal}) differ from others developed for differentiable surfaces (see \cite{Baraff,Ezair-Elber} and references therein). 

During last two decades, there has been an increasing use of quadric surfaces for modeling objects within the context of collision detection. This family of surfaces, together with conic curves, has been extensively studied, especially using techniques from Projective Geometry (see, for example, \cite{woods} for a classical reference where polarity is used to study the relative position of a pair of conics). When considering a pair of quadrics, much information about them is obtained using their associated pencil. Moreover, the characteristic polynomial obtained from the pencil provides important information linking the two surfaces. The work of Wang et al. \cite{wang-wang-kim} was seminal in introducing this polynomial associated to the pencil of two ellipsoids to detect contact between them. These methods have been extended to other quadric surfaces \cite{BPST_Hyperboloid,BV-PS-SS-TT} and exploited for practical uses, such as the detection of position for UAVs \cite{castro-et-al,dapena}. The analysis of the intersection of quadrics was initiated much earlier (see \cite{Levin79}) and continues to be an active research field (see \cite{Laureano2021,Pazouki-et-al,Wang-Goldman-Tu2003,Wang2004,Wang2009,wilf-manor} and references therein). 

Quadric surfaces allow to approximate very accurately a large variety of shapes. This is one of the main reasons for their use in contact detection problems. Also, since quadric surfaces are described by means of a quadratic polynomial, they are easier to handle than many other curved surfaces. In this paper we consider two quadrics, one of which is an ellipsoid. This particular surface is the only closed quadric surface. This feature makes it the most appropriate selection for modeling an object just by one surface. Moreover, the three degrees of freedom provided by its three axes allow to approximate many different objects. This justifies that an important part of the literature in this field involves ellipsoids (see, for instance, \cite{X2011,Pazouki-et-al,Wang2004,wang-wang-kim}). However, the ellipsoid has positive curvature and the shape of other objects requires the use of other quadric surfaces, for example, hyperboloids with negative curvature. In this paper we address the problem of contact detection between an ellipsoid and another quadric surface. 

Generally, we consider an ellipsoid $\mathcal{E}$ and another quadric surface $\mathcal{Q}$. While previous works as \cite{BPST_Hyperboloid,BV-PS-SS-TT,wang-wang-kim} treated particular quadrics, here we consider a wider class of surfaces. Along this work, the possible quadric surface $\mathcal{Q}$ is going to be one of the following: ellipsoid, hyperbolic or elliptic paraboloid, hyperboloid of one or two sheets, elliptic, parabolic or hyperbolic cylinder, or two planes. We shall make clear that we avoid the use of two coincidental planes as the quadric $\mathcal{Q}$, since from a geometric viewpoint they are equivalent to one plane. Let $E$ and $Q$ be their associated matrices. The {\it characteristic polynomial} of the pencil $\lambda E+Q$ is the fourth degree polynomial given by
\begin{equation}\label{eq:char-poly}
\mathfrak{P}(\lambda)=\det (\lambda E+Q).
\end{equation}
Notice that, since $E$ is non-degenerate, the roots of $\mathfrak{P}$ are the characteristic roots of the matrix $-Q E^{-1}$, so we will refer to them as the characteristic roots of $\mathfrak{P}$. 

The characteristic roots of $\mathfrak{P}$ permitted to detect the relative position between two ellipsoids, an ellipsoid and a paraboloid, or an ellipsoid and a hyperboloid of one sheet in \cite{wang-wang-kim,BPST_Hyperboloid,BV-PS-SS-TT} in some instances. 
In particular, it was shown that if there exists two complex conjugate (non-real) roots of $\mathfrak{P}$ then the quadric surfaces are in non-tangent contact. The converse is not true in general, as two quadrics may intersect non-tangentially and $\mathfrak{P}$ have four real roots (counted with multiplicity).

Since the existence of non-real roots can be easily detected by the discriminant of the polynomial $\mathfrak{P}$, it would be desirable to understand under which circumstances contact between quadrics can be noticed by a direct computation of the discriminant. This is the first aim of this work and with that purpose we introduce the following concept that relates the size and shape of the two quadric surfaces.
 
\begin{definition}\label{smallness-condition}
{\bf Smallness condition.} We say that the ellipsoid $\mathcal{E}$ is small with respect to the quadric surface $\mathcal{Q}$ if the intersection of the two quadric surfaces cannot be two curves at any relative position. 
\end{definition}

We consider that a curve is a $1$-dimensional connected set. 
The number of connected components of the intersection of two quadric surfaces ranges from $0$ to $2$ (see \cite{Degtyarev-et-al}), so the smallness condition rules out the possibility of two connected components in the intersection which are curves, but allows two isolated tangent points. 
We will see in Section~\ref{sect:proof-main-th} (Lemma~\ref{le:no-tangency-on-curve}) that the possibility of one  isolated tangent point and a curve is also eliminated by the smallness condition.
A similar definition was first given in \cite{BV-PS-SS-TT} to solve the particular problem of an ellipsoid and an elliptic paraboloid, although the condition was slightly more restrictive, since two tangent points were not allowed as a possible intersection set.

The smallness condition given in Definition~\ref{smallness-condition} is going to be analyzed in detail in Section~\ref{sect:smallness-condition}. Intersecting planes or cones do not satisfy Definition~\ref{smallness-condition} for any ellipsoid, whereas for other quadric surfaces it depends on some relations between the length axes and the curvature of the two surfaces. In Theorem~\ref{th:smallness-condition} we show how to check that $\mathcal{E}$ and $\mathcal{Q}$ satisfy the smallness condition by means of the parameters in the quadric equations. This characterization in terms of the parameters makes the condition more tractable computationally and allows a purely algorithmic checking. 

We respond to the first objective of this work showing that the smallness condition in Definition~\ref{smallness-condition} is a precise hypothesis that implies the equivalence between transversal contact (i.e., non-tangent contact) of the two quadric surfaces and the presence of non-real characteristic roots. 

\begin{theorem}\label{th:contact}
	Let $\mathcal{E}$ be a small ellipsoid with respect to the quadric surface $\mathcal{Q}$. Then $\mathcal{E}$ and $\mathcal{Q}$ are in transversal contact if and only if the characteristic polynomial $\mathfrak{P}(\lambda)$ has a pair of complex conjugate (non-real) roots. 
\end{theorem}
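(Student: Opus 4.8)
The plan is to establish the two implications separately; the smallness condition is needed only for the direction ``transversal contact $\Rightarrow$ non-real roots''.

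The direction ``non-real roots $\Rightarrow$ transversal contact'' does not use the smallness hypothesis and is the fact recalled just before Definition~\ref{smallness-condition}; for the wide family of quadrics $\mathcal{Q}$ considered here I would deduce it by the pencil argument of \cite{wang-wang-kim,BPST_Hyperboloid,BV-PS-SS-TT}. Normalizing signs so that $E$ has signature $(3,1)$, a non-real root $\lambda_0 = \alpha + i\beta$, $\beta \neq 0$, of $\mathfrak{P}$ gives a complex eigenvector $v = a + ib$ of $-QE^{-1}$; separating real and imaginary parts and using the symmetry of $E$ and $Q$ yields $a^{T}Ea + b^{T}Eb = 0$. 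This is incompatible with $\mathcal{E}$ and $\mathcal{Q}$ being separated or with one of them containing the other, so the two surfaces meet, and since $\beta \neq 0$ no real pencil member $\lambda E + Q$ is singular at a common point, so the contact is transversal. I would present this briefly and devote the bulk of the proof to the converse.

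For ``transversal contact $\Rightarrow$ non-real roots'' I would argue by contradiction: suppose $\mathcal{E}$ and $\mathcal{Q}$ are in transversal contact but $\mathfrak{P}$ has four real roots counted with multiplicity. By the smallness condition, Lemma~\ref{le:no-tangency-on-curve} and the bound on the number of components in \cite{Degtyarev-et-al}, being in transversal contact forces $\mathcal{E}\cap\mathcal{Q}$ to be connected; since it is non-tangential and lies in the compact surface $\mathcal{E}$, it is a single smooth closed curve along which $\mathcal{E}$ and $\mathcal{Q}$ cross transversally. It then suffices to show that four real roots force $\mathcal{E}\cap\mathcal{Q}$ to be empty, finite, singular, or disconnected. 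To this end I would put $(E,Q)$ in simultaneous canonical form. If $\mathfrak{P}$ has a repeated real root, the corresponding pencil member is a singular quadric; inspecting its canonical block shows that either it has a vertex $p$ on $\mathcal{E}$, where $Ep$ and $Qp$ are proportional and so $\mathcal{E}$ and $\mathcal{Q}$ are tangent at $p$ (contradiction), or $\mathcal{E}\cap\mathcal{Q}$ is empty, a pair of points, or a pair of disjoint ovals. If $\mathfrak{P}$ has four distinct real roots $\lambda_1<\lambda_2<\lambda_3<\lambda_4$, then $-QE^{-1}$ is diagonalizable and, the eigenvectors being $E$-orthogonal, there is a basis in which $E = \operatorname{diag}(\varepsilon_1,\dots,\varepsilon_4)$ with exactly one $\varepsilon_i<0$ and $Q = \operatorname{diag}(-\lambda_1\varepsilon_1,\dots,-\lambda_4\varepsilon_4)$. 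In these coordinates $\mathcal{E}$ is a round sphere and $\mathcal{E}\cap\mathcal{Q}$ is the zero level set, on that sphere, of a diagonal quadratic form whose eigenvalues are the three differences $\lambda_i-\lambda_j$ with $j\neq i$; this form restricts to a Morse function on the sphere, and a short inspection of the signs of those differences --- equivalently, of whether $\lambda_i$ is extremal or intermediate among $\lambda_1<\cdots<\lambda_4$ --- shows the level set to be empty (extremal case) or a pair of disjoint ovals (intermediate case), never a single oval. Either outcome contradicts the description of $\mathcal{E}\cap\mathcal{Q}$ above, so $\mathfrak{P}$ has a non-real root.

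The hard part is making the last step exhaustive: one must go through every canonical pattern with all roots real, including the degenerate choices of $\mathcal{Q}$ such as cylinders or pairs of planes (for which $\mathfrak{P}$ has a forced root at $0$), and check that, once the smallness condition is imposed, none of them can produce a single smooth transversally crossed oval. A preliminary point that also needs care is to fix the precise meaning of ``transversal contact'' and to confirm --- using Lemma~\ref{le:no-tangency-on-curve} and \cite{Degtyarev-et-al} --- that under the smallness condition it is equivalent to ``$\mathcal{E}\cap\mathcal{Q}$ is a single smooth closed curve''.
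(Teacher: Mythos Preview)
Your plan is sound and reaches the theorem, but by a genuinely different route from the paper's. The paper does not simultaneously diagonalise the pair or invoke any Morse-theoretic level-set count: once Lemma~\ref{le:no-tangency-on-curve} is in place, its proof simply reads off, from the classification of real quadric intersections in \cite{Wang2009}, the full list of Segre types compatible with each of the intersection topologies that survive under the smallness hypothesis (empty set, one or two isolated tangent points, a single transversal curve, or a curve of tangency), and observes that the only types carrying a complex-conjugate pair are $[1111]_2$ and $[(11)11]_1$, which are precisely the ``single transversal curve'' cases. So where you propose to rebuild the relevant fragment of that classification by hand via canonical forms, the paper treats \cite{Wang2009} as a black box and the argument becomes a short table lookup. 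Your approach has the merit of being self-contained and of making the mechanism visible --- the level-set count on the sphere is a clean explanation of why four simple real roots force zero or two ovals, never one --- at the cost of the exhaustive case analysis you already flag for repeated roots and for degenerate $\mathcal Q$, which in effect reproduces a portion of \cite{Wang2009}.

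One step in your sketch does need repair. For ``non-real $\Rightarrow$ transversal'' you write that ``since $\beta\neq 0$ no real pencil member $\lambda E+Q$ is singular at a common point, so the contact is transversal''. This does not follow: the presence of a non-real root does not preclude \emph{other} real roots of $\mathfrak P$, and the null vector of some real $\lambda E+Q$ could still lie on both surfaces and produce a tangent point. The correct statement --- that every Segre type admitting a complex-conjugate pair yields a smooth (hence transversal) intersection --- is true, but its justification is again a short enumeration of the admissible types (this is how Remark~\ref{re:necessary-condition} obtains it from \cite{Wang2009}) rather than the one-line deduction you give. Since you already propose to delegate this direction to \cite{wang-wang-kim,BPST_Hyperboloid,BV-PS-SS-TT}, the remainder of your plan is correct.
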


The proof of Theorem~\ref{th:contact} is given in Section~\ref{sect:proof-main-th}. The approach differs substantially to those followed in previous works as \cite{BPST_Hyperboloid,BV-PS-SS-TT}, since it is based on the analysis of the possible intersections between quadrics. Moreover,  the new approach relies on a combination of algebraic tools and methods from differential geometry. 

Based on Theorem~\ref{th:contact}, we can detect contact exclusively using discriminants associated to the characteristic polynomial, without the need of computing explicitly the roots of \eqref{eq:char-poly}. This provides an efficient way of detecting contact as will be shown in Section~\ref{sect:proof-main-th} (see Corollary~\ref{co:discriminant}). The detection of contact and relative positions between an ellipsoid and a plane is considered separately in Section~\ref{subsect:plane} (see Theorem~\ref{th:plane}).

Additionally to the detection of contact through the nature of the characteristic roots, the information encoded in the characteristic polynomial allows to detect the relative position between the two quadrics. Section~\ref{sect:relative-positions} is devoted to this task in the present context and previous results in \cite{BPST_Hyperboloid,BV-PS-SS-TT,wang-wang-kim} are extended. In Theorem~\ref{th:rel-pos}, the relative position of the small ellipsoid $\mathcal{E}$ with respect to the quadric surface $\mathcal{Q}$ is characterized in terms of the sign of the characteristic roots or, alternatively, in terms of the sign of the coefficients of $\mathfrak{P}(\lambda)$. Thus, this provides a computationally efficient method to approach the problem of identifying relative positions.

The second main goal of this work is to provide an efficient method to detect contact between a small ellipsoid and a combination of quadrics. The idea of composing geometric objects was considered, for example, in \cite{choi2014} for specific quadrics. We are considering here a more general context addressed in Section~\ref{sect:combined-quadrics}, where an algorithm is proposed for an efficient detection of the relative position between them. Thus, this proposal allows to model a great variety of real-world situations where two objects interact: a small ellipsoid models one of them and the other one is modeled by pieces of quadrics separated by a plane or other quadrics. A simple example is included to illustrate the method for particular quadric surfaces.

\section{The smallness condition}\label{sect:smallness-condition}
Along this section we analyze in detail the smallness condition given in Definition~\ref{smallness-condition}. First, we must emphasize that the smallness condition is a condition in a pair of surfaces and it depends on the relation between the two of them. Also, it is intrinsic to the geometry of the two surfaces, so it does not depend on a particular position, but on the possibility that the two surfaces intersect in two curves  when they are placed appropriately. As a consequence, since rigid motions do not alter the geometry of the surfaces, this smallness condition is invariant under rigid transformations of space. 

Notice that if the quadric $\mathcal{Q}$ is a pair of intersecting planes or a cone, then one can place $\mathcal{E}$ close enough to the intersecting ray or the vertex, respectively, to see that the smallness condition is not satisfied. Therefore, intersecting planes and cones are excluded from the analysis.

One of the main interests of Theorem~\ref{th:contact} is that we get a simple way to detect contact between the quadric surfaces. This provides an efficient algorithm with a simple implementation. Since the smallness condition is a necessary hypothesis, for practical purposes it would also be convenient to express it in a way that can be checked computationally. We will make this condition more tangible by inequalities in terms of the parameters of the quadrics.

 
Depending on the quadric $Q$ that we consider, the smallness condition in Definition~\ref{smallness-condition} results in different kind of restrictions. Some are related with the distance between particular points in $Q$ and affects directly to the axes of $\mathcal{E}$, whereas others depend on the curvature and impose conditions on the relations between the axes of $\mathcal{E}$. In order to specify them we consider a general ellipsoid $\mathcal{E}$ in standard form
\begin{equation}\label{eq:elipsoid-standar-form}
	\frac{x^2}{\alpha^2}+\frac{y^2}{\beta^2}+\frac{z^2}{\gamma^2}=1, \text{ with } \alpha\geq \beta \geq \gamma,
\end{equation}
and another quadric surface $\mathcal{Q}$.

\subsection{Direct restrictions on the axes (size).}\label{subse:restrictions-axes}
We assume $\mathcal{E}$ is small with respect to the quadric $\mathcal{Q}$. 
If $\mathcal{Q}$ is an ellipsoid, a hyperboloid of one or two sheets, an elliptic or a hyperbolic cylinder, or two parallel planes, then a first relation between the parameters of  $\mathcal{E}$ and $\mathcal{Q}$ is obtained by a direct study of distances between points of the surfaces.
If the quadric $\mathcal{Q}$ is an ellipsoid
\[
\frac{x^2}{a^2}+\frac{y^2}{b^2}+\frac{z^2}{c^2}=1, \text{ with } a\geq b \geq c,
\]
it is immediate that if the  ellipsoid $\mathcal{E}$ is small in comparison with $\mathcal{Q}$ then the largest axis of $\mathcal{E}$ must be smaller than or equal to the smallest axis of $\mathcal{Q}$. Hence, we conclude that $c\geq\alpha$. If the quadric $\mathcal{Q}$ is a hyperboloid of one sheet 
\[
\frac{x^2}{a^2}+\frac{y^2}{b^2}-\frac{z^2}{c^2}=1, \text{ with } a\geq b,
\]
then the major axis of $\mathcal{E}$ must be smaller than or equal to the smaller axis of $\mathcal{Q}$. Hence, we conclude that $b\geq\alpha$. Also, if the quadric $\mathcal{Q}$ is a hyperboloid of two sheets
\[
\frac{x^2}{a^2}+\frac{y^2}{b^2}-\frac{z^2}{c^2}=-1,\text{ with } a\geq b,
\]
then the major axis of $\mathcal{E}$ must be smaller than or equal to the distance between vertices in $\mathcal{Q}$. Hence we conclude that  $c\geq\alpha$.

The cases where $\mathcal{Q}$ is a cylinder can be projected orthogonally to a plane which is perpendicular to the axis. If $\mathcal{Q}$ is an elliptic cylinder $\frac{x^2}{a^2}+\frac{y^2}{b^2}=1$ with  $a\geq b$, it is clear that a necessary condition is $b\geq \alpha$. Whereas if $\mathcal{Q}$ is a hyperbolic cylinder  $\frac{x^2}{a^2}-\frac{y^2}{b^2}=1$ then $a\geq\alpha$.

Finally, for parallel planes with equation $\frac{x^2}{a^2}-1=0$, we observe that $2a$ is the minimum distance between two points that lie on different planes. Hence, the only restriction to avoid the possibility of two curves in the intersection is that $a\geq \alpha$.


\subsection{Restrictions on the relations between axes (shape).}\label{subse:conditions-curvature}
The previous conditions between the ellipsoid $\mathcal{E}$ and the quadric $\mathcal{Q}$, however, are not sufficient for the smallness condition to be satisfied. See, for example, Figure~\ref{fig:curv-cond-smallness-cond}. The curvature of the two quadric surfaces also plays a role in the verification of the smallness condition. Recall that the normal curvature $\kappa^S(p,\vec v)$ of a surface $S$ at a point $p$ in a fixed direction $\vec v$ is given by the curvature at $p$ of the curve obtained by the normal section in the direction of $\vec v$, this is, obtained by intersecting the surface with the normal plane at $p$ which contains $\vec v$. Also, the maximum and minimum normal curvatures  ($\kappa^S_{max}(p)$ and $\kappa_{min}^S(p)$) at a point $p$ are the principal curvatures of the surface at $p$. We denote by $\kappa_{max}^S$ and $\kappa_{min}^S$, respectively, the maximum and minimum principal curvature of the surface $S$. The other conditions that the surfaces $\mathcal{E}$ and $\mathcal{Q}$ must satisfy for $\mathcal{E}$ to be small in comparison with $\mathcal{Q}$ can be stated in terms of the principal curvatures of the surfaces.

\begin{figure}
	\includegraphics[height=3cm]{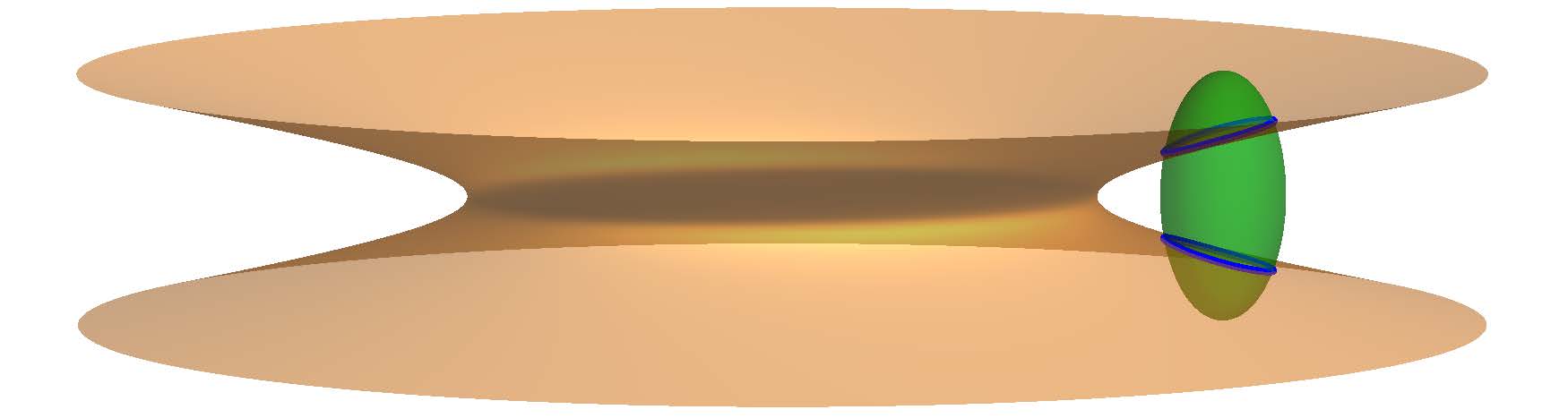}
	\caption{ The ellipsoid is not small with respect to the hyperboloid. There is a condition on the principal curvatures of the quadric surfaces for the smallness condition to be satisfied.}
	\label{fig:curv-cond-smallness-cond}
\end{figure}

\begin{lemma}\label{le:small-curvatures}
	If an ellipsoid $\mathcal{E}$ is small with respect to another quadric $\mathcal{Q}$ then 
	\[
	\kappa_{max}^\mathcal{Q}\leq \kappa_{min}^\mathcal{E}.
	\]
\end{lemma}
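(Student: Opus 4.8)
The plan is to argue by contrapositive: if the curvature inequality $\kappa_{max}^{\mathcal{Q}} \leq \kappa_{min}^{\mathcal{E}}$ fails, then one can place $\mathcal{E}$ in a position where the intersection $\mathcal{E} \cap \mathcal{Q}$ contains two curves, contradicting smallness. So suppose there is a point $q \in \mathcal{Q}$ and a principal direction $\vec v$ at $q$ with $\kappa^{\mathcal{Q}}(q,\vec v) > \kappa_{min}^{\mathcal{E}}$. Pick a point $p \in \mathcal{E}$ where the minimal principal curvature $\kappa_{min}^{\mathcal{E}}$ is attained (on an ellipsoid this is at an endpoint of the longest axis), and move $\mathcal{E}$ by a rigid motion so that $p = q$ and the surfaces are internally tangent there, with the normal of $\mathcal{E}$ pointing into the region where $\mathcal{Q}$ is locally less curved.

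First I would reduce to a local, two-dimensional comparison. Near the common tangent point, write both surfaces as graphs over their common tangent plane: $z = f_{\mathcal{E}}(u)$ and $z = f_{\mathcal{Q}}(u)$ for $u$ in a neighborhood of the origin, with $f_{\mathcal{E}}(0) = f_{\mathcal{Q}}(0) = 0$ and $\nabla f_{\mathcal{E}}(0) = \nabla f_{\mathcal{Q}}(0) = 0$. The Hessians at the origin are the respective second fundamental forms (up to the common first fundamental form, which is Euclidean there). The key point: the difference $g = f_{\mathcal{Q}} - f_{\mathcal{E}}$ has a Hessian at the origin whose quadratic form, evaluated in the direction $\vec v$, equals $\kappa^{\mathcal{Q}}(q,\vec v) - \kappa^{\mathcal{E}}(p,\vec v) \geq \kappa^{\mathcal{Q}}(q,\vec v) - \kappa_{min}^{\mathcal{E}} > 0$ by hypothesis, while I can also arrange, by choosing $\vec v$ itself to be a direction where $\kappa^{\mathcal{E}}$ is not minimal or by a suitable rotation, that in some other direction $\vec w$ the difference of curvatures is negative — using that $\mathcal{E}$ is strictly convex so $\kappa_{max}^{\mathcal{E}} > \kappa_{min}^{\mathcal{E}}$ generically, and that $\mathcal{Q}$'s curvature $\kappa^{\mathcal{Q}}(q,\cdot)$ in the perpendicular principal direction can be taken small (or negative, if $\mathcal{Q}$ is a hyperboloid or paraboloid) compared to $\kappa_{max}^{\mathcal{E}}$. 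More carefully: I want the Hessian of $g$ at the origin to be indefinite, so that the zero set of $g$ near the origin is (to leading order) a pair of transversal curve-germs, i.e.\ the intersection is locally a crossing of two arcs; continuing these arcs globally on the compact ellipsoid forces the intersection to contain a closed curve, and a perturbation or the global geometry of $\mathcal{Q}$ produces a second component. Actually the cleanest route: arrange $g \geq 0$ on one side and $g \leq 0$ on the other along two separate arcs through $p$, yielding two disjoint intersection curves directly.

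The main obstacle I anticipate is making the passage from \emph{the local Hessian of the difference being indefinite} to \emph{two connected curves in the global intersection} fully rigorous — a local crossing of the two surfaces only guarantees that the intersection is one-dimensional near $p$, not that it has two components. To handle this I would exploit the convexity and compactness of $\mathcal{E}$: a surface that is internally tangent to $\mathcal{E}$ at $p$ but bulges outside $\mathcal{E}$ in one tangent direction and stays inside in the perpendicular direction must re-enter $\mathcal{E}$, and because $\mathcal{Q}$ is one of the listed quadrics (none of which has a "cap" that could be entirely swallowed), the set $\mathcal{Q} \setminus \mathcal{E}$ is forced to have a component whose boundary is a closed curve on $\mathcal{E} \cap \mathcal{Q}$ separate from the one through $p$; alternatively, one can translate $\mathcal{E}$ slightly along its normal at $p$ so that the tangency breaks into a transversal crossing with two loops, which is a cleaner configuration to verify. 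A secondary technical point is the edge case where the extremal curvature is attained along a whole circle of the ellipsoid (when two axes of $\mathcal{E}$ are equal) or where $\mathcal{Q}$ is flat in some direction (cylinders, planes); in those cases the same tangent-and-bulge argument applies but one must check the degenerate Hessian still produces two branches, which it does because the higher-order terms of the strictly convex ellipsoid dominate.
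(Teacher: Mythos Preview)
Your contrapositive strategy and the idea of placing $\mathcal{E}$ tangent to $\mathcal{Q}$ at a pair of extremal-curvature points is exactly the starting point of the paper's proof. However, your execution diverges from the paper's and contains a genuine gap that you do not flag.

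\medskip
\textbf{The gap.} You assert that the Hessian of $g=f_{\mathcal Q}-f_{\mathcal E}$ at the tangent point can be made \emph{indefinite}. Your justification invokes $\kappa_{max}^{\mathcal E}$, but at the point $p$ you have chosen (the vertex where $\kappa_{min}^{\mathcal E}$ is attained) the principal curvatures of $\mathcal E$ are $\gamma/\alpha^{2}$ and $\gamma/\beta^{2}$; neither equals the global maximum $\alpha/\gamma^{2}$. Nothing prevents \emph{both} principal curvatures of $\mathcal Q$ at $q$ from exceeding \emph{both} principal curvatures of $\mathcal E$ at $p$. Concretely, take $\mathcal Q$ a sphere of radius $a$ with $1/a>\gamma/\beta^{2}\ (\geq \gamma/\alpha^{2})$: then at the tangent configuration the Hessian of $g$ is positive definite, so there is no local crossing at all. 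Your figure-eight/indefinite-Hessian mechanism simply does not fire. (Smallness still fails in this example, but through a different mechanism: the flatter ellipsoid bulges outside the sphere in a full neighbourhood of $p$, re-enters along a closed curve, and a slight push produces a second closed curve near $p$. You would need to supply this argument separately.)

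\medskip
\textbf{How the paper avoids this.} Rather than analysing the $2\times 2$ Hessian of the graph difference in $3$D, the paper immediately slices by a normal plane through the tangent point, chosen so that the section of $\mathcal E$ realises its minimal normal curvature and the section of $\mathcal Q$ realises its maximal one. This reduces the claim to the planar statement ``a small ellipse has everywhere curvature at least that of the other conic'', which is then verified by elementary algebra in three cases (ellipse--ellipse, ellipse--hyperbola, ellipse--parabola): one writes the two conics tangent at a vertex, solves the resulting quadratic, and reads off that extra intersection points exist precisely when the curvature inequality fails. The planar reduction collapses your problematic $2\times 2$ Hessian to a single scalar comparison, so the definite/indefinite dichotomy never arises. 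What you lose is generality (three explicit cases instead of one local lemma); what you gain is that every step is a concrete computation with no global topological argument needed.
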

\begin{proof}
In order to compare the principal curvatures of the two surfaces, we are going to place the quadrics in the more favorable position for the presence of intersection curves. Then we reduce the problem in one dimension by considering sections by a suitable plane. Note that if the smallness condition is not satisfied, then there is an appropriate position between the quadrics so that the intersection has two curves. Hence an intentionally chosen normal section by a plane gives two conics that intersect in four different points. The conic obtained from the small ellipsoid is an small ellipse (i.e. an ellipse that cannot intersect the other conic in more than two points).
Consequently, the result follows directly from the following: 

{\bf Claim:} {\it if the smallness condition is satisfied, then the curvature at any point of the small ellipse is greater than or equal to the curvature at any point of the other conic}.

The next objective is to prove this claim. The problem trivializes if one conic is a ray, so we study the situations given by pairs of conics of the form ellipse-ellipse, ellipse-parabola or ellipse-hyperbola and analyze them separately as follows.

\noindent{\it Ellipse-ellipse}: in order to simplify the calculation and compare curvatures we place the small ellipse $E$ so that it is tangent to the other ellipse in one vertex (see Figure~\ref{fig:ellipse-conic}(i)). Now the corresponding equations are
	\begin{equation}\label{eq:elip-elip}
	E: \frac{(x-a+\beta)^2}{\beta^2}+\frac{y^2}{\alpha^2}=1\qquad \text{ and } \qquad C_1:\frac{x^2}{a^2}+\frac{y^2}{b^2}=1,
	\end{equation}
	where $\alpha\geq \beta>0$ and $a\geq b>0$. The diameter of $E$ is smaller than or equal to any axis of $C_1$ for the smallness condition to be satisfied. Hence $\alpha\leq b$ and we work with the inequalities in the parameters of the conics given by $a\geq b\geq \alpha\geq \beta>0$.
	\begin{figure}[h]
	\includegraphics[width=4cm]{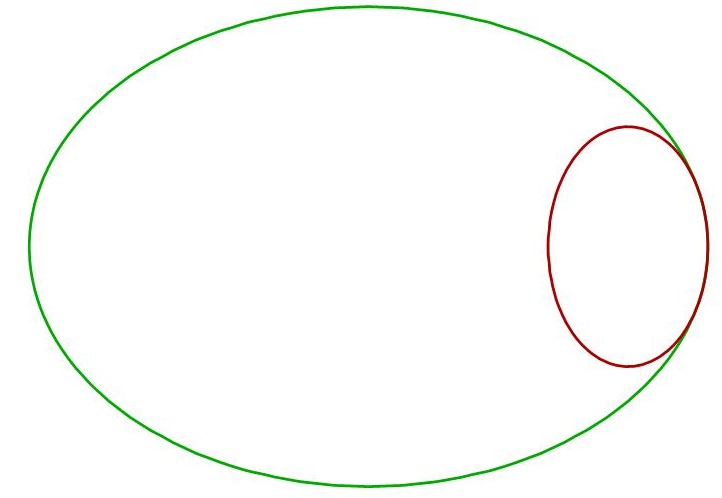} \includegraphics[width=4cm,height=2.8cm]{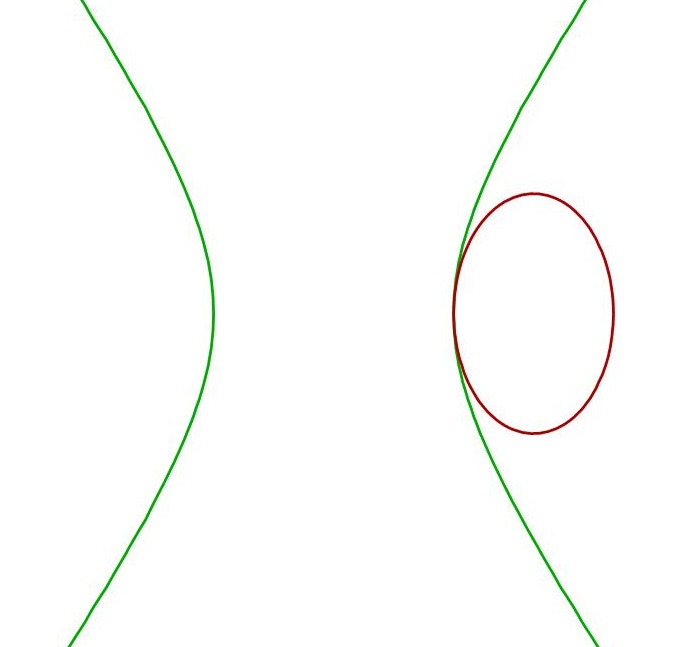}
	\includegraphics[width=4cm,height=2.8cm]{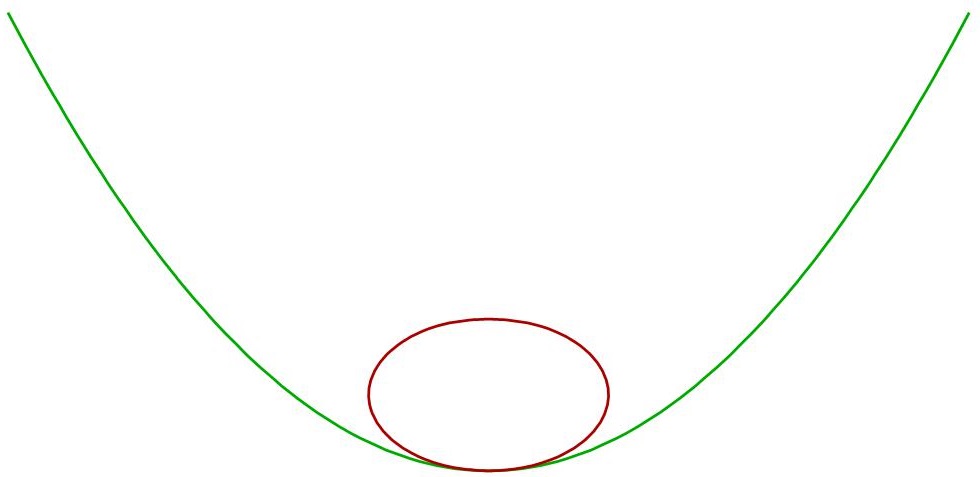}
	\qquad\qquad\qquad\qquad\qquad\qquad\qquad\qquad (i) \qquad\qquad\qquad\qquad\qquad (ii) \qquad\qquad\qquad\qquad\qquad  (iii)
	\caption{The small ellipse $E$ (red) is placed to be tangent to the conic $C$ at the point of minimum curvature of the $E$ and maximum curvature of $C$.}
	\label{fig:ellipse-conic}
	\end{figure}
Note that this placement of the pair of conics is obtained as a section from an appropriate placement of the pair of quadric surfaces. If $\beta=a$, then $\alpha=\beta=a=b$ and we have two circles of the same radius that can coincide in all points, which is an admissible situation. Henceforth  we assume $\beta < a$ and there is only one possible solution to the system of equations \eqref{eq:elip-elip} for the smallness condition to be satisfied. Note that if they intersect in three points, then a slight variation of the position makes them intersect at four different points. Using $y^2=b^2\left(1-\frac{x^2}{a^2}\right)$ we substitute in the  equation of $E$ to obtain
 \[
 x=a \text{ or } x=\frac{a \left(a^2 \alpha ^2-2 a \alpha ^2 \beta +b^2 \beta ^2\right)}{a^2 \alpha ^2-b^2 \beta ^2}.
 \]
The first solution corresponds to the tangent point $(a,0)$ and the second one gives
 \begin{equation}\label{eq:ellipse-ellipse}
 y^2 = -\frac{4 a \alpha ^2 b^2 \beta  (a-\beta ) \left(b^2 \beta -a \alpha ^2\right)}{(b \beta -a \alpha )^2
 	(a \alpha +b \beta )^2}.
 \end{equation}

For the smallness condition to be satisfied, there can not be more solutions for $y$ in \eqref{eq:ellipse-ellipse} than $y=0$. This implies that $y^2\leq 0$ in \eqref{eq:ellipse-ellipse}. Note that if $y=0$, then $b^2 \beta -a \alpha ^2=0$ and the expression for $x$ reduces to $x=a$. Now, if $y^2<0$, since $\beta < a$, all factors in \eqref{eq:ellipse-ellipse} have to be positive, so we conclude that $b^2 \beta -a \alpha ^2< 0$. In conclusion, we have that the desired relation between the parameters is $b^2 \beta -a \alpha ^2\geq  0$. 

An ellipse given by $C_1$ can be parameterized as $c_1(t)=(a\,\cos t,b\,sin t)$ with $t\in[0,2\pi]$. Since the curvature along the curve is given by $\kappa_1(t)=\frac{ab}{\sqrt{a^2\sin^2 t+b^2 \cos^2 t}^3}$ (see, for example, \cite{docarmo}), we have that the maximum curvature is attained at one vertex and its value is $\frac{a}{b^2}$. Analogously, we see that the minimum curvature of the small ellipse $E$ is $\frac{\beta}{\alpha^2}$. Thus, the relation $b^2 \beta -a \alpha ^2\geq 0$ can be written equivalently in terms of the maximum and minimum curvatures of the ellipses as 
\[
\frac{a}{b^2} \leq \frac{\beta}{\alpha^2}.
\]
 
\noindent{\it Ellipse-hyperbola:} we place the ellipse $E$, as in the previous case, tangent to the hyperbola at the vertex point (see Figure~\ref{fig:ellipse-conic}(ii)). The corresponding equations are
\begin{equation}\label{eq:hip-elip}
	E: \frac{(x-a-\beta)^2}{\beta^2}+\frac{y^2}{\alpha^2}=1\qquad \text{ and } \qquad C_2:\frac{x^2}{a^2}-\frac{y^2}{b^2}=1,
\end{equation}
where $\alpha\geq \beta$. An analogous argument to that given for two ellipses now shows the relation between the maximum curvature of the hyperbola, which is attained at the vertex point and values $\frac{a}{b^2}$, and the minimum curvature of the ellipse:
\[
\frac{a}{b^2}\leq \frac{\beta}{\alpha^2}.
\]

\noindent{\it Ellipse-parabola:} we place the small ellipse $E$ tangent at the vertex point of the parabola $C_3$ (see Figure~\ref{fig:ellipse-conic}(iii)) so that the equations of the conics are
\[
E: \frac{x^2}{\alpha^2}+\frac{(y-\beta)^2}{\beta^2}=1\qquad \text{ and } \qquad C_3:\frac{x^2}{a^2}-y=0.
\]
The smallness condition is satisfied if and only if the system of equations has only the solution $(0,0)$. We repeat the process above analyzing this two equations. As a result we obtain that the maximum curvature of the parabola, realized at the vertex point with value $\frac{2}{a^2}$, is less than or equal to the minimum curvature of the ellipse:
\[
\frac{2}{a^2}\leq \frac{\beta}{\alpha^2}.
\]
Hence the claim follows.
\end{proof}

\subsection{Smallness condition for $\mathcal{E}$ and $\mathcal{Q}$ in standard form}
Based on the previous analysis and the curvature conditions given in Lemma~\ref{le:small-curvatures}, we can now characterize the smallness condition in terms of relations between the parameters of the quadric surfaces. 

\begin{lemma}\label{le:small-cond-parameters-canonical-form}
Let $\mathcal{E}$ be an ellipsoid given by equation \eqref{eq:elipsoid-standar-form} and let $\mathcal{Q}$ be another quadric in standard form. 
Then the conditions  given in Table~\ref{table:smallness-condition-canonical-form} are necessary and sufficient  for $\mathcal{E}$ to be small with respect to $\mathcal{Q}$.

\begin{table}[H]
	\begin{tabular}{|c|c|c|@{\vrule height 10pt depth 8pt width 0pt} }
		\hline	
		
		{\bf Type of the quadric $\mathcal{Q}$} & {\bf Equation of $\mathcal{Q}$} & {\bf Conditions} \\
		
		\hline
		
		\multirow{2}{*}[-5pt]{Ellipsoid} & \multirow{2}{*}[-5pt]{$\frac{x^2}{a^2}+\frac{y^2}{b^2}+\frac{z^2}{c^2}=1$,\, $(a\geq b \geq c)$}& $c\geq \alpha$\\
		& & $\frac{a}{c^2}\leq \frac{\gamma}{\alpha^2}$ \\
		
		\hline
		
		\multirow{2}{*}[-5pt]{Hyperboloid of one sheet} & \multirow{2}{*}[-5pt]{$\frac{x^2}{a^2}+\frac{y^2}{b^2}-\frac{z^2}{c^2}=1$,\, $(a\geq b)$} & $b\geq\alpha$\\
		& &  $\frac{a}{c^2}\leq \frac{\gamma}{\alpha^2} $\\
		
		\hline
		
		\multirow{2}{*}[-5pt]{Hyperboloid of two sheets} & \multirow{2}{*}[-5pt]{$\frac{x^2}{a^2}+\frac{y^2}{b^2}-\frac{z^2}{c^2}=-1$,\, $(a\geq b)$} & $c\geq\alpha$\\
		& & $\frac{c}{a^2}\leq \frac{\gamma}{\alpha^2}$ \\
		
		\hline
		
		\raisebox{-3pt}{Elliptic paraboloid} & \raisebox{-3pt}{$\frac{x^2}{a^2}+\frac{y^2}{b^2}-z=0$,\, $(a\geq b)$} & \raisebox{-3pt}{$\frac{2}{b^2}\leq \frac{\gamma}{\alpha^2}$}  \\
		
		\hline
		
		\raisebox{-3pt}{Hyperbolic paraboloid} & \raisebox{-3pt}{$\frac{x^2}{a^2}-\frac{y^2}{b^2}-z=0$,\, $(a\geq b)$} &  \raisebox{-3pt}{$\frac{2}{b^2}\leq \frac{\gamma}{\alpha^2}$} \\
		
		\hline
		
		\multirow{2}{*}[-5pt]{Elliptic cylinder} & \multirow{2}{*}[-5pt]{$\frac{x^2}{a^2}+\frac{y^2}{b^2}=1$,\, $(a\geq b)$} & $b\geq \alpha$\\  
		& & $\frac{a}{b^2}\leq \frac{\gamma}{\alpha^2}$  \\
		
		\hline
		
		\multirow{2}{*}[-5pt]{Hyperbolic cylinder} & \multirow{2}{*}[-5pt]{$\frac{x^2}{a^2}-\frac{y^2}{b^2}=1$} & $a\geq\alpha$\\
		& &  $ \frac{a}{b^2}\leq \frac{\gamma}{\alpha^2}$ \\
		
		\hline
		
		\raisebox{-3pt}{Parabolic cylinder} & \raisebox{-3pt}{$\frac{x^2}{a^2}-z=0$} & \raisebox{-3pt}{$ \frac{2}{a^2}\leq \frac{\gamma}{\alpha^2}$} \\
		
		\hline
		
		\raisebox{-3pt}{Pair of parallel planes} & \raisebox{-3pt}{$\frac{x^2}{a^2} - 1 = 0$} & \raisebox{-3pt}{$a\geq \alpha$}\\
		
		\hline
	\end{tabular}
	\vspace{1em}
	\caption{Smallness condition in terms of the parameters of the ellipsoid $\mathcal{E}$ and the other quadric surface $\mathcal{Q}$ in standard form.}
	\label{table:smallness-condition-canonical-form}
\end{table}

\end{lemma}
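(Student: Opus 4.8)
The plan is to obtain Lemma~\ref{le:small-cond-parameters-canonical-form} by assembling the two families of restrictions already isolated in this section. For the direct implication I would show that each inequality in Table~\ref{table:smallness-condition-canonical-form} is necessary: the ``size'' inequalities are exactly those obtained from distance considerations in Subsection~\ref{subse:restrictions-axes}, while the ``shape'' inequalities are the curvature inequality $\kappa_{max}^{\mathcal{Q}}\le\kappa_{min}^{\mathcal{E}}$ of Lemma~\ref{le:small-curvatures} made explicit. To do this I would first compute the extremal principal curvatures in closed form. For the ellipsoid~\eqref{eq:elipsoid-standar-form} the principal curvatures attain their extrema at the vertices, and parametrizing the three coordinate ellipses and applying the curvature formula from \cite{docarmo}, as in the proof of Lemma~\ref{le:small-curvatures}, gives $\kappa_{min}^{\mathcal{E}}=\gamma/\alpha^{2}$, attained at $(0,0,\pm\gamma)$. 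For each admissible $\mathcal{Q}$ in standard form one computes, in the same way, the relevant maximal curvature --- which turns out to be $a/c^{2}$ for the ellipsoid and the hyperboloid of one sheet, $c/a^{2}$ for the hyperboloid of two sheets, $a/b^{2}$ for the elliptic and hyperbolic cylinders, and the vertex curvature $2/b^{2}$ (resp.\ $2/a^{2}$) of the parabola section for the paraboloidal types --- namely, exactly the quantity occurring in the corresponding row of Table~\ref{table:smallness-condition-canonical-form}. Substituting these values into $\kappa_{max}^{\mathcal{Q}}\le\kappa_{min}^{\mathcal{E}}$ produces the ``shape'' inequality of each row, and together with Subsection~\ref{subse:restrictions-axes} this establishes necessity.

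For sufficiency I would argue by contraposition. Assume $\mathcal{E}$ is not small with respect to $\mathcal{Q}$, so that in some relative position the intersection $\mathcal{E}\cap\mathcal{Q}$ has two connected components which are curves. Following the idea of the proof of Lemma~\ref{le:small-curvatures}, I would choose a normal plane meeting both components; the section then consists of a small ellipse together with a conic section of $\mathcal{Q}$ that meet in at least four points. Classifying, up to a rigid motion, which pairs (small ellipse, conic section of $\mathcal{Q}$) can actually arise for each type of $\mathcal{Q}$ reduces the situation to one of the three model pairs ellipse--ellipse, ellipse--hyperbola and ellipse--parabola already analyzed in Lemma~\ref{le:small-curvatures}; four intersection points then force the violation of one of the inequalities listed for $\mathcal{Q}$ --- a size inequality when the critical section is a bounded one that $\mathcal{E}$ overruns or a slab/region that it spans, and a curvature inequality when the critical section is an unbounded one along which $\mathcal{E}$ has to straddle two branches or to lie inside a parabola. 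Consequently, if every inequality in the row holds then no relative position can yield two curves, so $\mathcal{E}$ is small.

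I expect the sufficiency part to be the main obstacle. The delicate issues there are: the bookkeeping of precisely which conic pairs occur as plane sections of a prescribed pair of quadric surfaces; the verification that, for each $\mathcal{Q}$, the two tabulated inequalities already obstruct four intersection points in \emph{all} admissible normal sections, so that no further relation between the parameters is needed; and the careful handling of the limiting and degenerate configurations --- equal axes ($\alpha=\beta$ or $\beta=\gamma$), tangency along a curve, and coincident or higher-order solutions of the section systems --- so that the argument stays compatible with Definition~\ref{smallness-condition}, which permits two isolated tangent points in the intersection.
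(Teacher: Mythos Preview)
Your proposal is correct and follows essentially the same route as the paper: necessity from the size restrictions of Subsection~\ref{subse:restrictions-axes} together with the curvature inequality of Lemma~\ref{le:small-curvatures} specialized to each quadric, and sufficiency by contraposition via planar sections and a case-by-case analysis. The paper's own proof is considerably terser---it simply invokes Subsections~\ref{subse:restrictions-axes}--\ref{subse:conditions-curvature} and Lemma~\ref{le:small-curvatures} and then asserts the case-by-case analysis---so your more detailed outline, including the explicit extremal curvatures and the flagged difficulties in the sufficiency direction, is a faithful expansion of what the paper leaves implicit.
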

\begin{proof}
In the case of the ellipsoid, the hyperboloids and the elliptic and hyperbolic cylinders, when doing sections by normal planes, one can obtain ellipses and hyperbolas. Thus, considering any section, the diameter of the ellipse given by $\mathcal{E}$ has to be smaller than the axes of the ellipse and the transverse axis of the hyperbola of $\mathcal{Q}$. These give rise to the first condition in each of these cases in Table~\ref{table:smallness-condition-canonical-form}. For all the quadric surfaces but parallel planes there is a condition in terms of the principal curvatures that was given in Lemma~\ref{le:small-curvatures}.
 
If the smallness condition is not satisfied, then there is a position where the two surfaces intersect in two curves. Moving the surface adequately if necessary, a case by case analysis of the pair of surfaces shows that  the condition on the axes of the quadrics  (see Subsection~\ref{subse:restrictions-axes}) or the condition on the curvature (see Subsection~\ref{subse:conditions-curvature}) is not satisfied.
\end{proof}

\subsection{Smallness condition for $\mathcal{E}$ and $\mathcal{Q}$ in general form}

The results given in this paper are intended to be used in practical real-life contexts, where one does not generally have quadric surfaces in standard form. Since the smallness condition is a necessary hypothesis, it is convenient to have a way to check it, without necessarily changing  coordinates and reducing equations to standard form. 
As the smallness condition is invariant under rigid transformations, the associated invariants to quadric surfaces are enough to determine whether the smallness condition holds for a given pair of quadrics. In this subsection we recall which are the needed invariants and provide the precise relations between them to check the smallness condition. 
	
The general equation of a quadric in Euclidean coordinates ${x_1,x_2,x_3}$ given by
\begin{equation}\label{eq:general-eq-quad}
	\sum_{i,j=1}^3 a_{ij} x_i x_j+\sum_{i=1}^3 2 b_i x_i+ c=0, \text{ where } a_{ij}=a_{ji},
\end{equation}
can be written as $X^T Q X = 0$ with $X^T = (x_1,x_2,x_3,1)$ and $Q$ the quadric's matrix:
\[
Q=\left(   \begin{array}{c@|c}
	a_{ij}	& b_j \\
	\hline
	b_i & c
\end{array} \right ) \text{ with } i,j=1,2,3.
\]
Associated to this equation, we have the following invariants:
\begin{itemize}
	\item The determinant of $Q$: $\det(Q)$.
	\item The eigenvalues of $Q_{00} = (a_{ij})$, that are labeled as $(EV):\mu_1, \mu_2$ and $\mu_3$. Observe that, as a consequence, the trace of $Q_{00}$, $\operatorname{tr}(Q_{00})=\mu_1+\mu_2+\mu_3$, and the determinant of $Q_{00}$, $\det (Q_{00})=\mu_1\mu_2\mu_3 $, are also invariant.
	\item $J = \det\left(   \begin{array}{rr}
		a_{11}&a_{12} \\
		a_{12}&a_{22}
	\end{array} \right ) +  \det\left(   \begin{array}{rr}
		a_{11}&a_{13} \\
		a_{13}&a_{33}
	\end{array} \right ) + \det\left(   \begin{array}{rr}
		a_{22}&a_{23} \\
		a_{23}&a_{33}
	\end{array} \right )$.
\item If $\tilde Q_{ij}$ is the adjoint matrix of $a_{ij}$ in $Q$: $K= \det \tilde Q_{11} + \det \tilde Q_{22}+ \det \tilde Q_{33} $.
\item $J' = \det\left(   \begin{array}{rr}
		a_{11} & b_1 \\
		b_1 & c
	\end{array} \right ) +  \det\left(   \begin{array}{rr}
		a_{22} & b_2 \\
		b_2 & c
	\end{array} \right ) + \det\left(   \begin{array}{rr}
		 a_{33} & b_3 \\
		b_3 &  c
	\end{array} \right )$.
\end{itemize}
%

\begin{theorem}\label{th:smallness-condition}
Let $\mathcal{E}$ be an ellipsoid and $\mathcal{Q}$  another quadric.
Then the relations given in Table~ \ref{table:smallness-condition-general} are necessary and sufficient  for $\mathcal{E}$ to be small with respect to $\mathcal{Q}$.
\end{theorem}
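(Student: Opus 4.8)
The plan is to reduce the problem to Lemma~\ref{le:small-cond-parameters-canonical-form} by translating the standard-form conditions of Table~\ref{table:smallness-condition-canonical-form} into the rigid-motion invariants listed above. Since the smallness condition is invariant under rigid motions and Lemma~\ref{le:small-cond-parameters-canonical-form} already characterizes it for a pair in standard form, it suffices to express each inequality of Table~\ref{table:smallness-condition-canonical-form} using only $\det Q$, the eigenvalues $\mu_1,\mu_2,\mu_3$ of $Q_{00}$, and the auxiliary invariants $J$, $K$, $J'$, and likewise for $\mathcal{E}$. Thus the proof is essentially a dictionary between canonical parameters and invariants, carried out type by type for each row of the table.

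The first step is to recall the standard invariant expressions for a quadric in canonical form. For the ellipsoid $\mathcal{E}$ with semi-axes $\alpha\geq\beta\geq\gamma$ the matrix $E_{00}$ has eigenvalues $1/\alpha^2,1/\beta^2,1/\gamma^2$ and $\det E = -1/(\alpha^2\beta^2\gamma^2)$, so $\alpha^2$ is the reciprocal of the smallest eigenvalue of $E_{00}$ and $\gamma^2$ the reciprocal of the largest; these recover $\alpha$ and $\gamma$ invariantly. The same computation for each admissible $\mathcal{Q}$ expresses $a,b,c$ (or the single parameter in the degenerate cases) in terms of $\mu_1,\mu_2,\mu_3$, $\det Q$, $J$, $K$, $J'$: for a central quadric the $\mu_i$ give the squared reciprocals of the axes up to sign and $\det Q$ fixes the overall scaling; for paraboloids, which have a zero eigenvalue, the nonzero $\mu_i$ together with $\det Q$ (or $K$) recover $a^2,b^2$; for cylinders one uses the rank-two structure of $Q_{00}$ together with $J'$ or $K$ to extract the parameters, and for the pair of parallel planes $J'$ alone suffices. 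I would present a compact auxiliary table (or a short paragraph per type) giving these substitutions.

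Once the parameters are written invariantly, the second step is purely mechanical: substitute into each inequality of Table~\ref{table:smallness-condition-canonical-form}. For instance, for $\mathcal{Q}$ an ellipsoid the size condition $c\geq\alpha$ becomes an inequality between the largest eigenvalue of $Q_{00}$ and the smallest eigenvalue of $E_{00}$ (after normalizing by the determinants), and the curvature condition $a/c^2\leq\gamma/\alpha^2$ becomes the corresponding inequality among $\mu_i$, $\det Q$, and the eigenvalues of $E_{00}$. Doing this for every row produces exactly the relations of Table~\ref{table:smallness-condition-general}. The necessity and sufficiency then follow immediately: a pair $(\mathcal{E},\mathcal{Q})$ in general position can be brought to standard form by a rigid motion, the invariants are unchanged, Lemma~\ref{le:small-cond-parameters-canonical-form} applies in standard form, and the translated inequalities hold for the original pair if and only if they hold for the standardized one.

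The main obstacle is bookkeeping rather than conceptual: one must make sure the invariants genuinely pin down the canonical parameters in each case, especially distinguishing the ordering conventions ($a\geq b\geq c$, etc.) and handling the non-central quadrics where $Q_{00}$ is singular, so that $J$, $K$, $J'$ are invoked exactly where an eigenvalue degenerates. Care is also needed with signs: the sign of $\det Q$ relative to $\det Q_{00}$ distinguishes, e.g., a hyperboloid of one sheet from one of two sheets, and the correct branch must be selected before writing the inequality. Beyond these case distinctions the argument is a direct verification, so I would organize it as a single proof that cites Lemma~\ref{le:small-cond-parameters-canonical-form} for the geometric content and then checks the invariant translation row by row.
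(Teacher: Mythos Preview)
Your approach is correct and essentially identical to the paper's: both reduce to Lemma~\ref{le:small-cond-parameters-canonical-form} via the rigid-motion invariance of the smallness condition and of the listed quadric invariants. The paper's proof is in fact just a one-line invocation of this reduction, whereas you spell out the dictionary between canonical parameters and invariants in more detail; the extra bookkeeping you describe is sound but not strictly needed, since Table~\ref{table:smallness-condition-general} already records the reduced equations alongside the invariants.
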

\begin{proof}
Due to the invariance under rigid transformations of the smallness conditions and the invariants associated to the quadric surfaces, the result follows from Lemma~\ref{le:small-cond-parameters-canonical-form}.
\end{proof}

\begin{table}[H]
	\begin{tabular}{|l|l|l|c|@{\vrule height 10pt depth 10pt width 0pt}}
		\hline
		\multicolumn{4}{|c|}{
			General ellipsoid $\mathcal{E}$ with reduced equation $\frac{x^2}{\alpha'^2} + \frac{y^2}{\beta'^2} + \frac{z^2}{\gamma'^2} = \delta'^2$}\rule[-0.4cm]{0cm}{1cm}\\
		\multicolumn{4}{|c|}
		{$EV: \frac{1}{\alpha'^2}, \frac{1}{\beta'^2}, \frac{1}{\gamma'^2}, \alpha'\geq\beta'\geq\gamma'$
			and  $\delta'^2 = -\alpha'^2\beta'^2\gamma'^2\det(E)$}\rule[-0.5cm]{0cm}{0.8cm}\\
		\hline\hline
		
		{\bf Quadric $\mathcal{Q}$} & {\bf Invariants} & {\bf Reduced equation} &  {\bf Conditions}\\
		
		\hline
		
		\multirow[c]{2}{*}[-0.5em]{Ellipsoid} & \raisebox{-2pt}{$EV: \frac{1}{a'^2}, \frac{1}{b'^2}, \frac{1}{c'^2},\;a'\geq b'\geq c'$} & \multirow[c]{2}{*}[-0.5em]{$\frac{x^2}{a'^2}+\frac{y^2}{b'^2}+\frac{z^2}{c'^2} = d'^2$} & $d'c' \geq \delta'\alpha'$\\		
		&	$d'^2 = -a'^2b'^2c'^2\det(Q)$ & & $\frac{a'}{d'c'^2}\leq \frac{\gamma'}{\delta'\alpha'^2}$\\
		
		\hline
		
		\multirow[c]{2}{2cm}[-0.5em]{Hyperboloid of one sheet} & \raisebox{-2pt}{$EV: \frac{1}{a'^2}$, $\frac{1}{b'^2}$, $-\frac{1}{c'^2}, a'\geq b'$} & \multirow[c]{2}{*}[-0.5em]{$\frac{x^2}{a'^2} + \frac{y^2}{b'^2} - \frac{z^2}{c'^2} = d'^2$} & $b'd'\geq \delta'\alpha'$\\
		& $d'^2 = a'^2b'^2c'^2\det(Q)$ & & $\frac{a'}{d'c'^2}\leq \frac{\gamma'}{\delta'\alpha'^2}$\\
		
		\hline
		
		\multirow{2}{2cm}[-0.5em]{Hyperboloid of two sheets} & \raisebox{-2pt}{$EV: \frac{1}{a'^2}$, $\frac{1}{b'^2}$, $-\frac{1}{c'^2}, a'\geq b'$} & \multirow[c]{2}{*}[-0.5em]{$\frac{x^2}{a'^2} + \frac{y^2}{b'^2} - \frac{z^2}{c'^2} = d'^2$} & $c'd'\geq \delta'\alpha'$\\		
		& $d'^2 = - a'^2b'^2c'^2\det(Q)$ & & $\frac{c'}{d'c'^2}\leq \frac{\gamma'}{\delta'\alpha'^2}$\\
		
		\hline
		
		\multirow[c]{2}{*}[-0.5em]{Elliptic paraboloid} & \raisebox{-2pt}{$EV: \frac{1}{a'^2}$, $\frac{1}{b'^2}$, $0, a'\geq b'$} & \multirow[c]{2}{*}[-0.5em]{$\frac{x^2}{a'^2} + \frac{y^2}{b'^2} - Lz = 0$} & \multirow[c]{2}{*}[-0.5em]{$\frac{2}{Lb'^2}\leq \frac{\gamma'}{\delta'\alpha'^2}$}\\		
		& $L^2=-\frac{4\det(Q)}{J}$ & & \\
		
		\hline
		
		\multirow[c]{2}{2cm}[-0.5em]{Hyperbolic paraboloid} & \raisebox{-2pt}{$EV: \frac{1}{a'^2}$, $-\frac{1}{b'^2}$, $0, a'\geq b'$} & \multirow[c]{2}{*}[-0.5em]{$\frac{x^2}{a'^2} - \frac{y^2}{b'^2} - Lz = 0$} & \multirow[c]{2}{*}[-0.5em]{$\frac{2}{Lb'^2}\leq \frac{\gamma'}{\delta'\alpha'^2}$}\\	
		& $L^2=-\frac{4\det(Q)}{J}$ & & \\
		
		\hline
		
		\multirow[c]{2}{*}[-0.5em]{Elliptic cylinder} &	\raisebox{-2pt}{$EV: \frac{1}{a'^2}$, $\frac{1}{b'^2}$, $0, a'\geq b'$} & \multirow[c]{2}{*}[-0.5em]{$\frac{x^2}{a'^2} + \frac{y^2}{b'^2} = M^2$} & $Mb'\geq \delta'\alpha'$ \\		
		& $M^2=-\frac{K}{J}$ & & $\frac{a'}{Mb'^2}\leq \frac{\gamma'}{\delta'\alpha'^2}$ \\
		
		\hline
		
		\multirow[c]{2}{*}[-0.5em]{Hyperbolic cylinder} & \raisebox{-2pt}{$EV: \frac{1}{a'^2}$, $-\frac{1}{b'^2}$, $0, a'\geq b'$} & \multirow[c]{2}{*}[-0.5em]{$\frac{x^2}{a'^2} - \frac{y^2}{b'^2} = M^2$} & $Mb'\geq \delta'\alpha'$ \\	
		& $M^2=-\frac{K}{J}$ & & $\frac{a'}{Mb'^2}\leq \frac{\gamma'}{\delta'\alpha'^2}$ \\
		
		\hline
		
		\multirow[c]{2}{*}[-0.5em]{Parabolic cylinder} & \raisebox{-2pt}{$EV: \frac{1}{a'^2}$, $0$, $0$} & \multirow[c]{2}{*}[-0.5em]{$\frac{x^2}{a'^2} -d'z=0$} & \multirow[c]{2}{*}[-0.5em]{$\frac{2}{d'a'^2}\leq \frac{\gamma'}{\delta'\alpha'^2}$} \\
		&  $d'^2=4Ka'^2$  & & \\
		
		\hline
		
		\multirow[c]{2}{*}[-0.5em]{Parallel planes} & \raisebox{-2pt}{$EV: \frac{1}{a'^2}$, $0$, $0$} &
		\multirow[c]{2}{*}[-0.5em]{$\frac{x^2}{a'^2} -d'^2=0$} & \multirow[c]{2}{*}[-0.5em]{$a'd'\geq \alpha'\delta'$}\\
		& $d'^2 = - a'^2 J'$ & &\\
		
		\hline
	\end{tabular}
	\vspace{0.5em}
	\caption{Conditions on the invariants to check that an ellipsoid $\mathcal{E}$ is small with respect to a quadric surface $\mathcal{Q}$}.
	\label{table:smallness-condition-general}
\end{table}

Note that, from the relations given in Table~ \ref{table:smallness-condition-general}, it is immediate to verify whether the smallness condition holds for a given pair of quadrics.

\section{Contact detection between the quadric surfaces}\label{sect:proof-main-th}

In this section we assume $\mathcal{E}$ is an small ellipsoid with respect to a quadric $\mathcal{Q}$. We deal with results aimed to detect contact between the two quadric surfaces. First we give the proof of Theorem~\ref{th:contact} and, secondly, we provide more efficient methods based on the use of a system of discriminants for the characteristic polynomial.

\subsection{Proof of Theorem~\ref{th:contact}}
We prove the main result by studying the possible intersections between two quadric surfaces. In order to do that, we begin by analyzing intersections which are a curve with a tangent point (see Figure~\ref{fig:curves-with-tangency}).

\begin{lemma}\label{le:common-curve}
	If $\mathcal{E}$ and $\mathcal{Q}$ intersect in a curve $C$ and there is a point $p\in C$ where the surfaces are tangent, then there is one normal curvature that coincides for the two surfaces at $p$.
\end{lemma}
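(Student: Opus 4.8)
The plan is to work locally near the tangent point $p$ and exploit the fact that at $p$ the two surfaces share a tangent plane. First I would set up coordinates: translate $p$ to the origin and rotate so that the common tangent plane at $p$ is the $xy$-plane, with the shared unit normal along the $z$-axis. In these coordinates both $\mathcal{E}$ and $\mathcal{Q}$ are locally graphs $z = f_{\mathcal{E}}(x,y)$ and $z = f_{\mathcal{Q}}(x,y)$ with $f_{\mathcal{E}}(0,0) = f_{\mathcal{Q}}(0,0) = 0$ and $\nabla f_{\mathcal{E}}(0,0) = \nabla f_{\mathcal{Q}}(0,0) = 0$. The curve $C$ lies in both graphs, so its projection $\gamma(t) = (x(t), y(t))$ to the $xy$-plane is a nonconstant smooth arc through the origin; let $\vec w = (x'(0), y'(0)) \neq \vec 0$ be its initial velocity, which determines a tangent direction $\vec v$ to $C$ at $p$ lying in the common tangent plane.

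The key step is to compare second-order behaviour along this direction. Since $f_{\mathcal{E}}(\gamma(t)) = z(t) = f_{\mathcal{Q}}(\gamma(t))$ for all $t$, differentiating twice at $t = 0$ and using that the first derivatives of $f_{\mathcal{E}}$ and $f_{\mathcal{Q}}$ vanish at the origin, the first-derivative terms drop out and one is left with $\vec w^{\,T} H_{\mathcal{E}} \vec w = \vec w^{\,T} H_{\mathcal{Q}} \vec w$, where $H_{\mathcal{E}}$ and $H_{\mathcal{Q}}$ are the Hessians of $f_{\mathcal{E}}$ and $f_{\mathcal{Q}}$ at the origin. In this adapted frame the Hessian of the graphing function at the origin is exactly the second fundamental form, so $\vec w^{\,T} H_{\mathcal{E}} \vec w$ and $\vec w^{\,T} H_{\mathcal{Q}} \vec w$ (after normalizing $\vec w$) are the normal curvatures $\kappa^{\mathcal{E}}(p, \vec v)$ and $\kappa^{\mathcal{Q}}(p, \vec v)$ of the two surfaces in the common tangent direction $\vec v$. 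Hence these two normal curvatures coincide, which is the assertion. I would phrase the conclusion in terms of the normal-section definition of $\kappa^S(p,\vec v)$ recalled before Lemma~\ref{le:small-curvatures}, noting that the normal plane containing $\vec v$ is the same for both surfaces because they share the normal line at $p$.

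The main technical point — and the place to be slightly careful — is the regularity of $C$ at $p$: one must know that $C$ is a genuine $1$-dimensional smooth curve near $p$ with a well-defined tangent direction, so that $\gamma$ is a smooth arc with $\vec w \neq \vec 0$. If $C$ were singular at $p$ (for instance a transversal self-crossing of two branches, which happens precisely when the quadric intersection is nodal), a branch-by-branch version of the argument still applies and yields a shared normal curvature along each branch direction. Since the statement only claims the existence of \emph{one} coinciding normal curvature, it suffices to pick any smooth branch of $C$ through $p$ and run the computation above along it; the fact that $\mathcal{E}$ and $\mathcal{Q}$ are quadrics guarantees $C$ is a real algebraic curve, so such a branch with a well-defined tangent always exists. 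I expect this bookkeeping about branches to be the only real obstacle; the differential-geometric core is the short second-derivative computation.
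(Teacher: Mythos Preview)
Your proposal is correct and takes a genuinely different route from the paper. The paper argues by continuity: it parameterizes a smooth arc of $C$ on a half-open interval $(a,b]$ with $\alpha(b)=p$, smooth on $(a,b)$ and merely continuous at $b$; along the smooth part it observes that the normal curvature of each surface in the direction of $\alpha'$ is obtained by projecting the curvature vector of $\alpha$ onto the respective surface normal, and then passes to the limit $t\to b$, where the two normals coincide so the two projections agree. Your approach instead works directly at $p$ in graph coordinates and compares Hessians, which is cleaner and more explicit when $C$ is regular at $p$.

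The trade-off is precisely the regularity issue you flag. The paper's limiting argument handles cuspidal intersections (which it explicitly mentions) for free, since it never differentiates at $p$ itself. Your ``differentiate twice'' step gives $0=0$ when $\gamma'(0)=0$, as at a cusp, so there you would need to upgrade to a tangent-cone argument: the projection $\gamma$ lies in the zero set of $g:=f_{\mathcal E}-f_{\mathcal Q}$, whose $2$-jet at the origin is $\tfrac12\,\vec w^{\,T}(H_{\mathcal E}-H_{\mathcal Q})\vec w$, so every tangent direction $\vec w$ to any branch of $\{g=0\}$ at the origin satisfies $\vec w^{\,T}(H_{\mathcal E}-H_{\mathcal Q})\vec w=0$. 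That closes the gap and is arguably the natural endpoint of your approach; with it in place your argument is comparable in length to the paper's, with the bonus of identifying the coinciding normal curvature explicitly as the second fundamental form in the tangent direction of $C$.
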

\proof
Let $C$ be the intersection curve for $\mathcal{E}$ and $\mathcal{Q}$. Analyzing the possibilities of intersection curves between the quadric surfaces (see, for example, \cite{Wang2009,wilf-manor}) we see that $C$ is a differentiable curve except, perhaps, at the tangent point $p$ (for example, if the intersection curve is a cuspidal quartic then there is a singularity at the cusp at $p$). We work in a neighborhood of $p$ and parameterize an arc of $C$ as $\alpha:(a,b]\to \mathbb{R}^3$ so that $\alpha(b)=p$, $\alpha$ is continuous in $(a,b]$ and smooth in $(a,b)$.

Since $\alpha$ is smooth in $(a,b)$, we choose a regular parameterization and compute $c'(t)$ for all $t\in (a,b)$. We normalize $c'(t)$ to assign a unit vector $\vec v(t)=c'(t)/\|c'(t)\|$ to each point $\alpha(t)$ for $t\in (a,b)$. Now, we define $\vec v(b):=\lim_{t\to b} \vec v(t)$ to extend the unit tangent vector to $\alpha(b)$.

Let $\kappa^{\mathcal{E}}(t)$ and $\kappa^{\mathcal{Q}}(t)$ be the normal curvatures at $\alpha(t)$ in the direction of $\vec v(t)$ in $\mathcal{E}$ and $\mathcal{Q}$, respectively, for $t\in (a,b]$. These two functions $\kappa^{\mathcal{E}}$ and $\kappa^{\mathcal{Q}}$ vary smoothly along $\alpha$. Now, note that the normal curvatures $\kappa^{\mathcal{E}}(t)$ and $\kappa^{\mathcal{Q}}(t)$ can be obtained from the curvature of $\alpha(t)$ by projecting on the normal vector to each of the surfaces for $t\in (a,b)$. Observing that the normal vector of $\mathcal{E}$ and $\mathcal{Q}$ varies smoothly along $\alpha(t)$ for $t\in (a,b]$ and that the normal vector of the two surfaces coincides in $\alpha(b)$, by continuity of the normal vector, of $\vec v$, of   $\kappa^{\mathcal{E}}$ and of $\kappa^{\mathcal{Q}}$, we conclude that the normal curvature of the two surfaces in the direction of $\vec v$ at $p$ is the same.
\qed

\begin{figure}[h]
	\includegraphics[height=3cm]{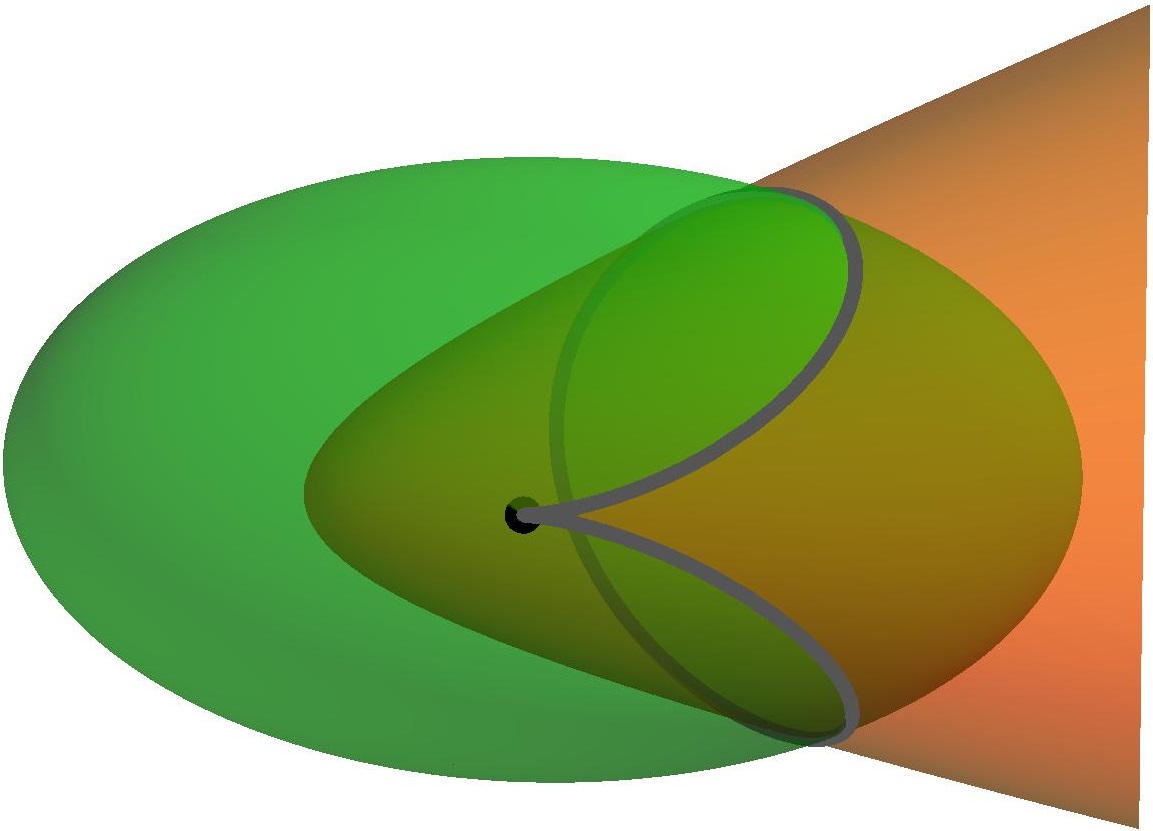}
	\includegraphics[height=3cm]{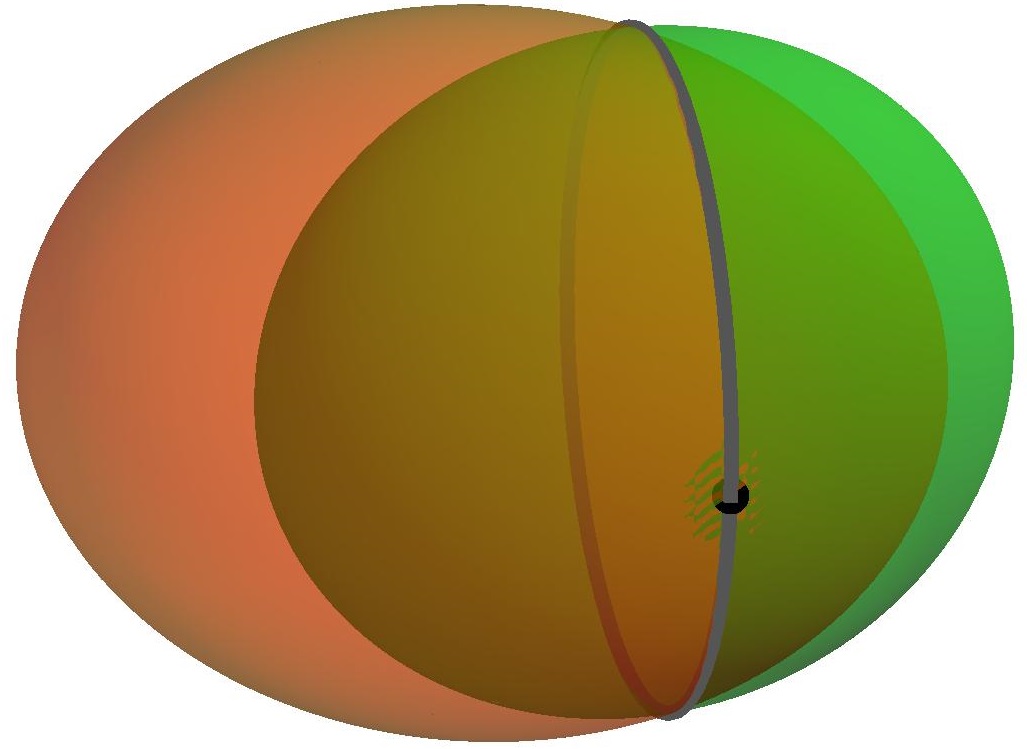}
	\caption{Intersection curves including one tangent point.}\label{fig:curves-with-tangency}
\end{figure}

\begin{lemma}\label{le:no-tangency-on-curve}
	Let $\mathcal{E}$ be small with respect to $\mathcal{Q}$. If they intersect in a curve $C$ and are tangent at a point $p$, then $C$ is a circle, $p$ belongs to $C$ and the quadric surfaces are tangent along $C$.
\end{lemma}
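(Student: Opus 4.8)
The plan is to combine the local differential–geometric rigidity furnished by Lemmas~\ref{le:small-curvatures} and~\ref{le:common-curve} at the tangent point with the classification of the intersection curves of two quadrics (see \cite{Wang2009,wilf-manor}), discarding every possibility except a single circle traced with tangential contact. I would first reduce to the case $p\in C$: if $p$ were an isolated point of the intersection lying off the curve $C$, an arbitrarily small change of the relative position of the quadrics would turn the isolated tangency into a small oval while $C$ survives, producing two curves and contradicting the smallness condition. So $p\in C$, and Lemma~\ref{le:common-curve} supplies a direction $\vec v$, the tangent direction of $C$ at $p$, along which the normal curvatures of $\mathcal{E}$ and $\mathcal{Q}$ at $p$ agree. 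Feeding this into Lemma~\ref{le:small-curvatures} gives
\[
\kappa^{\mathcal{Q}}(p,\vec v)\ \le\ \kappa_{max}^{\mathcal{Q}}\ \le\ \kappa_{min}^{\mathcal{E}}\ \le\ \kappa^{\mathcal{E}}(p,\vec v)\ =\ \kappa^{\mathcal{Q}}(p,\vec v),
\]
so all four quantities coincide; denote the common value by $\kappa_0$.

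From this equality I would read off that $\vec v$ is a principal direction of both surfaces at $p$, that the extremum $\kappa_{min}^{\mathcal{E}}=\kappa_0$ is attained at $p$ on $\mathcal{E}$, and that $\kappa_{max}^{\mathcal{Q}}=\kappa_0$ is attained at $p$ on $\mathcal{Q}$. A short check over the admissible list for $\mathcal{Q}$ shows that such an extreme point is always a vertex of $\mathcal{Q}$, and likewise $p$ is an endpoint of a shortest axis of $\mathcal{E}$; in either case $p$ lies on two orthogonal planes of symmetry of the surface. Hence, in coordinates adapted to $\vec v$ and the common normal at $p$, the Taylor expansions of $\mathcal{E}$ and $\mathcal{Q}$ at $p$ have no cubic terms.

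Next I would run through the classification. Since the surfaces are tangent at $p$, the intersection is singular there, so it is not a smooth quartic. It is not a cuspidal quartic, because a cusp of the intersection at $p$ would require a nonzero cubic term in the local expansion of the difference of the two surfaces. It is not a nodal quartic: the two transversal branches at a node would provide two distinct directions at $p$ in which, again by Lemma~\ref{le:common-curve} and the squeezing above, the normal curvatures of $\mathcal{E}$ and $\mathcal{Q}$ coincide and equal $\kappa_0$; this makes $p$ umbilic on both surfaces, forcing second-order contact in every direction at $p$, which is incompatible with a node. A tacnodal quartic or, more generally, an intersection made of two conics is discarded because an arbitrarily small change of position separates the components into two disjoint curves, again violating smallness; and since an ellipsoid contains no straight line, no line component can occur. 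What survives is that the intersection is a single conic $C$ with $p\in C$, and that $\mathcal{E}$ and $\mathcal{Q}$ are tangent along $C$: two quadrics cannot be transversal along a whole conic, since their intersection has degree four.

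It then remains to see that $C$ is a circle. As $C$ is a plane conic lying in a plane $\{\ell=0\}$ along which the two quadrics are tangent, a direct argument identifies the matrix of $\mathcal{Q}$ with the matrix of $\mathcal{E}$ plus a scalar multiple of the rank-one form $\ell\,\ell^{T}$, where $\ell$ is the coefficient vector of that plane's equation. Diagonalising this rank-one perturbation expresses the reduced equation and the semiaxes of $\mathcal{Q}$ through those of $\mathcal{E}$ and the coefficients of $\ell$, and comparing with the conditions in Table~\ref{table:smallness-condition-canonical-form} shows that the smallness condition can hold only when $\{\ell=0\}$ is a plane of circular section of $\mathcal{E}$; that is, $C$ is a circle, of curvature $\kappa_0$ by the first step. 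The step I expect to be the main obstacle is the third paragraph together with the comparison in the fourth: matching the finite list of curve types against the rigidity obtained at $p$ requires a careful organisation of the perturbation arguments that dispose of the reducible and isolated configurations, and one must also verify, type by type for $\mathcal{Q}$, that $\kappa_{max}^{\mathcal{Q}}$ is indeed realized at a symmetric vertex — which is what legitimises dropping the cubic terms — and that the concluding comparison with the smallness inequalities forces $C$ to be circular.
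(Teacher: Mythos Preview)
Your opening --- the perturbation argument forcing $p\in C$, followed by the curvature squeeze via Lemmas~\ref{le:common-curve} and~\ref{le:small-curvatures} to obtain $\kappa_{max}^{\mathcal{Q}}=\kappa_0=\kappa_{min}^{\mathcal{E}}$ realised at a vertex $p$ of both surfaces --- coincides with the paper's proof. The two arguments separate after that point.

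The paper proceeds by a direct geometric section: it slices both surfaces by a plane through $p$, orthogonal to the common tangent plane, that meets $C$ in a second point $q$. The resulting planar conics $c^{\mathcal{E}}$ and $c^{\mathcal{Q}}$ are tangent at the common vertex $p$, and the curvature inequality there forces $c^{\mathcal{Q}}$ to be an ellipse or to have two components; hence $\mathcal{Q}$ must be an ellipsoid, a hyperboloid of one sheet, or an elliptic cylinder. A short case analysis on these three types then shows $C$ is a circle along which the surfaces are tangent.

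Your route is more algebraic: you use the vertex symmetry to kill the cubic jets, walk through the singular types in the quadric--intersection classification to isolate the double-conic case, and finally encode tangency along a plane conic via the pencil and test it against Table~\ref{table:smallness-condition-canonical-form}. Two small corrections are needed. First, the rank-one relation is $Q=\mu E+\nu\,\ell\ell^{T}$ with a scalar $\mu$ in front of $E$ (the pencil contains the double plane; it is not the case that $Q-E$ itself has rank one), and the subsequent semi-axis comparison must carry $\mu$ through. Second, your perturbation for the ``tacnodal quartic'' really disposes of the \emph{reducible} case (two conics meeting tangentially); this suffices, since an irreducible space quartic with an $A_3$ point cannot arise as a complete intersection of two quadrics by the genus count. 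With these adjustments your scheme is sound. The trade-off is that the paper's section argument is shorter and pins down the admissible $\mathcal{Q}$ immediately, whereas your approach gives a cleaner structural reason why the intersection must be a single doubled conic but defers the geometric work to the final comparison with the smallness table --- a case check of effort comparable to the paper's own.
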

\begin{proof}
We begin by proving that $p$ belongs to $C$. We argue by contradiction, so we assume first that $p$ does not belong to the curve $C$. Then the intersection of the quadrics has one connected component which is the curve and another connected component which is the tangent point. A straightforward analysis of the morphology of the possible intersections between two quadrics shows that this tangent point is an isolated tangent point. Therefore, slightly translating $\mathcal{E}$ in the appropriate direction transforms the isolated tangent point in a differentiable curve. Thus, after the translation there are two connected components which are curves, so the smallness condition is not satisfied. Hence, we conclude that $p$ belongs to $C$.

Now, if $p$ belongs to $C$,  then Lemma~\ref{le:common-curve} applies and we have that
the normal curvature $\kappa (p)$ of the two surfaces at $p$ in the direction of the curve coincides. If $\kappa_{min}^\mathcal{E}$, $\kappa_{max}^\mathcal{E}$ and $\kappa_{min}^\mathcal{Q}$, $\kappa_{max}^\mathcal{Q}$ are the principal curvatures of the surfaces, in virtue of Lemma~\ref{le:small-curvatures}, the following relation is satisfied:
\[
\kappa_{min}^\mathcal{Q}\leq \kappa_{max}^\mathcal{Q}\leq \kappa_{min}^\mathcal{E}\leq \kappa_{max}^\mathcal{E}.
\]
Since $\kappa_{min}^\mathcal{Q}\leq \kappa(p)\leq \kappa_{max}^\mathcal{Q}$ and $\kappa_{min}^\mathcal{E}\leq \kappa(p)\leq \kappa_{max}^\mathcal{E}$,
we conclude that $\kappa_{max}^\mathcal{Q}=\kappa(p)= \kappa_{min}^\mathcal{E}$. Hence the normal curvature in the tangent direction of $C$ at $p$ is principal and, moreover, is the maximum principal curvature for $\mathcal{Q}$ and the minimum principal curvature for $\mathcal{E}$.
Therefore, the point $p$ has to be a vertex of $\mathcal{E}$ with minimum principal curvature and a vertex of $\mathcal{Q}$ with maximum principal curvature. Note that, because $\mathcal{E}$ and $\mathcal{Q}$ are tangent at $p$, the two surfaces share the same tangent plane at $p$. Now, we consider a section of $\mathcal{E}$ and $\mathcal{Q}$ by a plane trough $p$ which is orthogonal to the tangent plane at $p$ and that intersects $C$ at least in another point $q$ different from $p$ (this is possible, since $C$ is closed). This plane intersects $\mathcal{E}$ and $\mathcal{Q}$ in two conic curves: $c^\mathcal{E}$ (which is an ellipse) and $c^\mathcal{Q}$ (possibly with two connected components). The point $p$ is a vertex for the two curves and the curvature of $c^\mathcal{E}$ at $p$ is greater than or equal to the curvature of $c^\mathcal{Q}$ at $p$. Hence $c^\mathcal{Q}$ is necessarily an ellipse or has two connected components. More specifically, $\mathcal{Q}$ is an ellipsoid, a hyperboloid of one sheet or an elliptic cylinder. 

If $\mathcal{Q}$ is an ellipsoid, since $\mathcal{E}$ is an small ellipsoid which shares a vertex with $\mathcal{Q}$ where they are tangent, the only possibility is that the two ellipsoids are tangent along the greater ellipse of $\mathcal{E}$. But, because this ellipse is common to the two ellipsoids and the maximum normal curvature of $\mathcal{Q}$ is smaller than the minimum normal curvature of $\mathcal{E}$ (see Lemma~\ref{le:small-curvatures}), the curvature at all points of the ellipse must be the same, so it is a circle. Moreover, since the curvature of this circle is the minimum normal curvature for $\mathcal{E}$, this quadric has to be an sphere. As a consequence of the smallness condition, since the two ellipsoids share a tangent circle, the only possibility in this case is that the two ellipsoids are coincidental spheres. 

Similar arguments are used if $\mathcal{Q}$ is an hyperboloid or an elliptic cylinder to conclude that the tangent curve $C$ is a circle, but in this case we get admissible cases as in Figure~\ref{fig:curves-with-tangency2}.  
\qed

\begin{figure}[h]
	\begin{tabular}{c@{\hskip 2.2cm} c}
	\includegraphics[height=3cm]{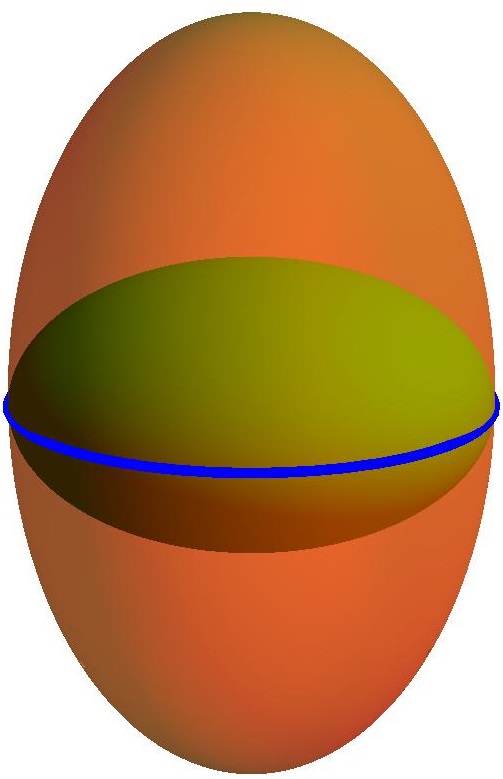}&
\includegraphics[height=3cm]{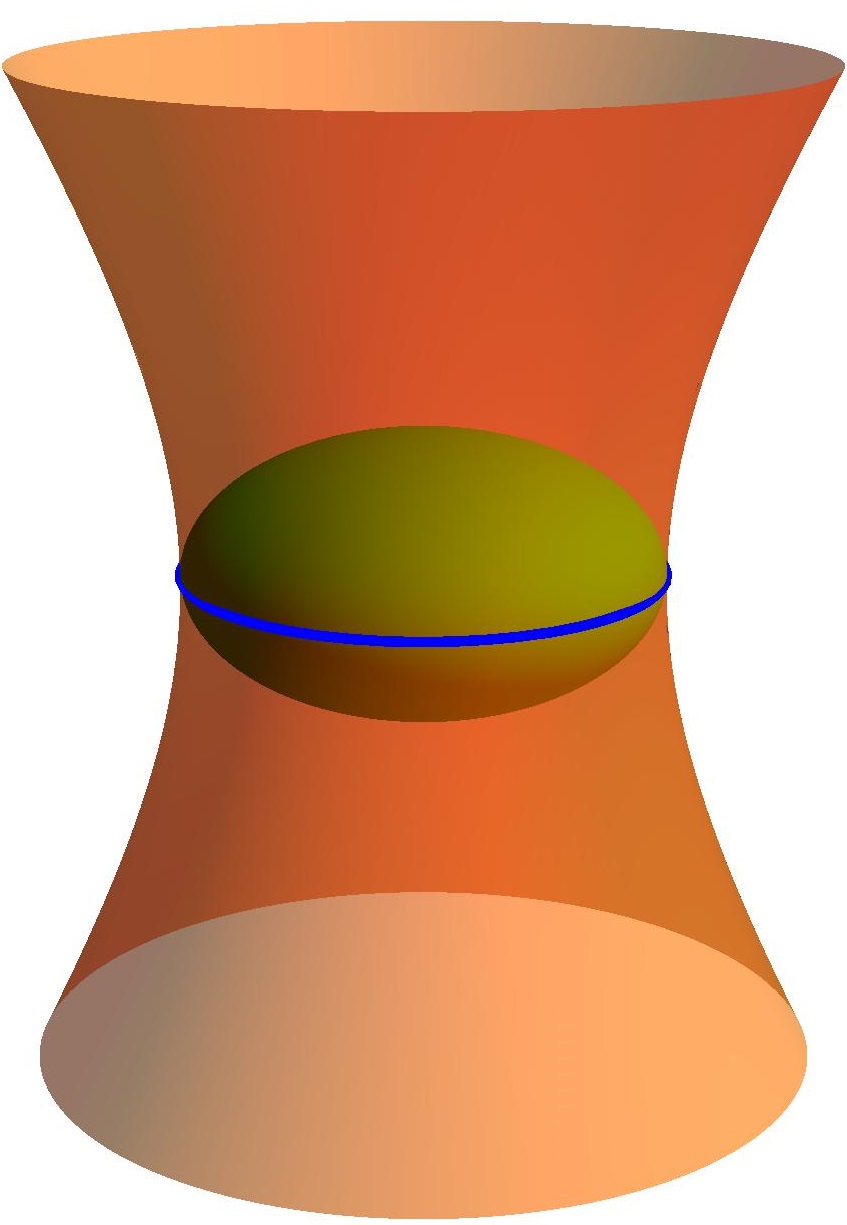}\qquad
	\includegraphics[height=3cm]{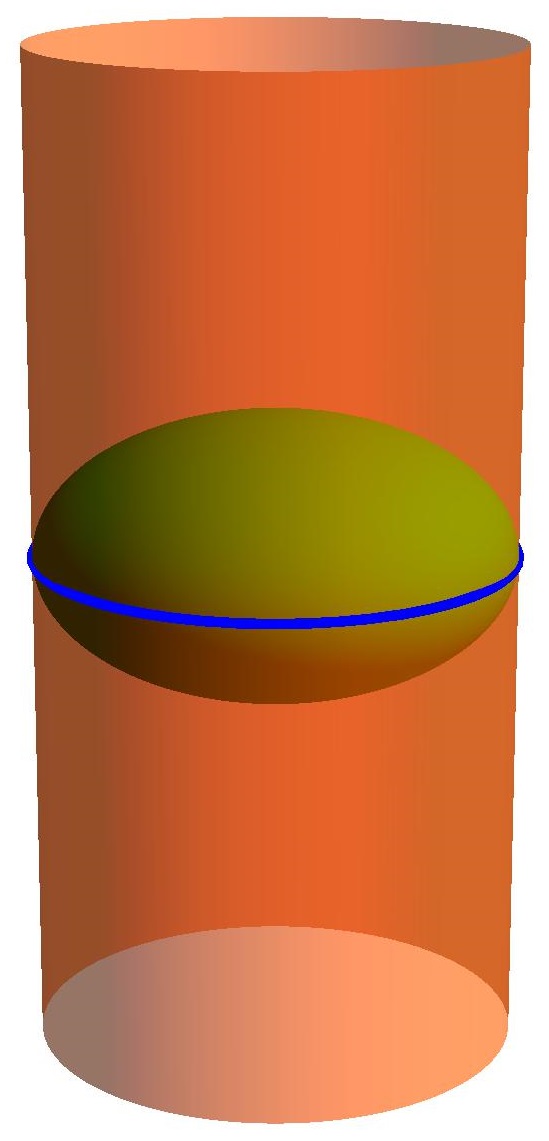}\\
	 (a) & (b)
	\end{tabular}
	\caption{(a)\, Two ellipsoids sharing a tangent circle do not satisfy the smallness condition. (b)\, A circular hyperboloid of one sheet and a circular cylinder provide admissible examples.}\label{fig:curves-with-tangency2}
\end{figure}

{\it Proof of Theorem~\ref{th:contact}.}
 
We consider an ellipsoid $\mathcal{E}$ and another quadric surface $\mathcal{Q}$ in the affine space. For convenience, the quadric surfaces can be thought at the real projective space $\mathbb{R}P^3$ and the affine space is a realization of it where we choose a plane at infinity. Since one of the quadric surfaces is an ellipsoid, the plane at infinity does not intersect the quadric surface and, therefore, the intersection of the two quadrics does not have points at the plane at infinity. Hence, when we consider the possible intersection curves in $\mathbb{R}P^3$, only homotopically null curves are admissible, which give rise to close curves in affine space. Since, moreover, we consider a ellipsoid which is small in comparison with the other quadric, we can also eliminate some other cases. Neither a curve and an isolated tangent point nor a curve with a tangent point are compatible with this hypothesis, as shown in Lemma~\ref{le:no-tangency-on-curve}. A direct application of the smallness condition also rules out the possibility of two curves as the intersection set.

From a topological point of view and attending to the classification of intersections between quadrics given in \cite{Wang2009}, we have the following possibilities for the intersection set of $\mathcal{E}$ and $\mathcal{Q}$, related with their respective Segre types and the roots of the characteristic polynomial:  
\begin{enumerate}
	\item {\it $\emptyset$:  no contact between the quadrics.} The possible Segre types are $[1111]_4$, $[(11)11]_3$, $[(111)1]_2$ and $[(11)(11)]_2$. Thus, the eigenvalues are real.
	\item {\it Isolated tangent points.} There are two possibilities, both of which can be realized:
	\begin{enumerate}
		\item $1$ isolated tangent point: cases 7, 22, 25 and 33 in \cite{Wang2009}. The Segre types are $[211]_3$, $[(21)1]_3$, $[2(11)]_2$ and $[(211)]_1$, which correspond to real eigenvalues.
		\item $2$ isolated tangent points: cases 15 and 30 in \cite{Wang2009}, with corresponding Segre types $[(11)11]_3$ and $[2(11)]_2$. As in the previous case, all eigenvalues are real. 
	\end{enumerate} 
	\item {\it One curve.} Depending on the existence of tangency, we consider two cases:
	\begin{enumerate}
		\item {\it $1$ connected component with no tangent points:} cases 3, 13 and 17 in \cite{Wang2009}, with corresponding Segre types $[1111]_2$ or $[(11)11]_1$. Hence, there is a pair of complex conjugate (non real) roots.
		\item {\it $1$ connected component with all points of tangency:} case 19 in \cite{Wang2009}.  The Segre type is $[(111)1]_2$, so there is a triple root and a single root. 
	\end{enumerate} 
\end{enumerate}
We include the previous classification in Table~\ref{table:eigenvalues-contact}. 
As a conclusion, under the hypotheses of Theorem~\ref{th:contact}, the transversal contact is identified by non-real roots. Hence, the characterization of Theorem~\ref{th:contact} follows.
\end{proof}

\begin{remark}\label{re:necessary-condition}\rm
We shall emphasize that the smallness condition given in Definition~\ref{smallness-condition} is a necessary hypothesis in Theorem~\ref{th:contact}. Indeed, if the intersection has two connected components which are two curves, then the associated Segre types are $[1111]_4$ and $[(11)11]_3$, with four real roots. Also, in the two cases where the intersection is a curve and an isolated tangent point the associated Segree types are $[211]_3$ and $[2(11)]_2$, again with real roots. Hence, the smallness condition cannot be relaxed in Theorem~\ref{th:contact}.
However, non-real roots for the characteristic polynomial always imply transversal contact, even if the smallness condition fails. Moreover, this is a general fact for any pair of quadrics. A direct analysis of the type of intersection between two quadrics (see \cite{Wang2009}) for Segre types with non-real roots shows the following:
\begin{quote}\it
	If the characteristic polynomial associated to any two quadrics has non-real roots, then they are in transversal contact.
\end{quote}
\end{remark}

\subsection{Contact detection using discriminants of $\mathfrak{P}(\lambda)$}
From a Complete Discrimination System one can determine the number and multiplicities of the real roots.
We consider a general characteristic polynomial \eqref{eq:char-poly}, which has degree four:
\begin{equation}\label{eq:char-poly-coef}
	\mathfrak{P}(\lambda)=c_4 \lambda^4+c_3\lambda^3 +c_2 \lambda^2+c_1 \lambda + c_0,
\end{equation}
where $c_0,\dots,c_4$ are coefficients determined by the parameters of the quadrics.
 In the case at hand, where we can detect transversal contact between quadrics just by checking two Segre types, we only need two terms of the discrimination system. We define (see \cite{emeris,yang}):
\[
\begin{array}{rcl} 
	\Delta_3 &=& 16 c_4^2c_0c_2-18 c_4^2 c_1^2-4c_4 c_2^3+14 c_4c_1c_3c_2-6c_4c_0c_3^2+c_2^2c_3^2-3c_1c_3^3,
		\\
	\noalign{\bigskip}
	\Delta_4 &=&    256 c_0^3 c_4^3-192 c_0^2 c_1 c_3 c_4^2-128 c_0^2 c_2^2 c_4^2+144 c_0^2 c_2 c_3^2 c_4-27 c_0^2 c_3^4\\
	\noalign{\medskip}
	&&
	+144 c_0 c_1^2 c_2 c_4^2-6 c_0 c_1^2 c_3^2
	c_4-4 c_1^3 c_3^3-80 c_0 c_1 c_2^2 c_3 c_4+18 c_0 c_1 c_2
	c_3^3
	\\
	\noalign{\medskip}
	&&
	+16 c_0 c_2^4 c_4-4 c_0 c_2^3 c_3^2	-27 c_1^4
	c_4^2+18 c_1^3 c_2 c_3 c_4-4 c_1^2
	c_2^3 c_4+c_1^2 c_2^2 c_3^2.
\end{array}
\]
Using the determination of roots in terms of these two discriminants, we obtain the following consequence of Theorem~\ref{th:contact}.
\begin{corollary}\label{co:discriminant}
	Let $\mathcal{E}$ be a small ellipsoid with respect to the quadric surface $\mathcal{Q}$. Then $\mathcal{E}$ and $\mathcal{Q}$ are in transversal contact if and only if one of the following holds:
	\begin{enumerate}
		\item $\Delta_4<0$,
		\item $\Delta_4=0$ and $\Delta_3<0$.
	\end{enumerate}
\end{corollary}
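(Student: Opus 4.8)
The plan is to deduce the corollary from Theorem~\ref{th:contact} by translating the statement ``$\mathfrak{P}$ has a pair of non-real conjugate roots'' into sign conditions on the members of the discrimination sequence of the quartic. Two preliminary remarks set the stage. First, since $E$ is non-degenerate, the leading coefficient of $\mathfrak{P}$ is $c_4=\det E\neq 0$, so $\mathfrak{P}$ is a genuine polynomial of degree four; moreover $\Delta_3$ and $\Delta_4$ are homogeneous of degrees four and six, respectively, in $c_0,\dots,c_4$, hence unchanged when $\mathfrak{P}$ is replaced by $-\mathfrak{P}$, so we may assume $c_4>0$. Second, Theorem~\ref{th:contact} identifies transversal contact with the presence of a non-real conjugate pair of roots, and the case analysis carried out in its proof (collected in Table~\ref{table:eigenvalues-contact}) shows that, under the smallness condition, $\mathfrak{P}$ can never have four non-real roots: the only Segre types producing non-real roots are $[1111]_2$ and $[(11)11]_1$, each contributing exactly one conjugate pair. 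Thus it suffices to prove the following algebraic fact: \emph{a real quartic with non-zero leading coefficient and with at most one conjugate pair of non-real roots has a non-real root if and only if $\Delta_4<0$, or $\Delta_4=0$ and $\Delta_3<0$.}

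I would argue by cases according to whether $\mathfrak{P}$ has a repeated root. If $\Delta_4\neq 0$, write $\mathfrak{P}=c_4\prod_{i=1}^{4}(\lambda-\lambda_i)$ with distinct roots; then $\Delta_4$ equals $c_4^{6}\prod_{i<j}(\lambda_i-\lambda_j)^2$ up to a positive constant fixed by the normalisation in \cite{emeris,yang}. If all $\lambda_i$ are real this product is strictly positive, whereas if $\lambda_3=\overline{\lambda_4}\notin\mathbb{R}$ and $\lambda_1,\lambda_2\in\mathbb{R}$, then $(\lambda_1-\lambda_2)^2$, $|\lambda_1-\lambda_3|^{4}$ and $|\lambda_2-\lambda_3|^{4}$ are positive while $(\lambda_3-\lambda_4)^2=-4(\operatorname{Im}\lambda_3)^2<0$, so $\Delta_4<0$. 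Since two conjugate pairs are excluded, in the non-degenerate case a non-real root is present precisely when $\Delta_4<0$.

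If $\Delta_4=0$, then $\mathfrak{P}$ has a repeated root; a repeated non-real root would force, together with its conjugate, two conjugate pairs, so the repeated root is real and $\mathfrak{P}=c_4(\lambda-r)^2h(\lambda)$ for a real quadratic $h$. Now I would invoke the degree-four Complete Discrimination System of \cite{emeris,yang}: when $\Delta_4=0$ and $\Delta_3\neq 0$, the quartic has exactly one (real) double root and two simple roots, which are both real if $\Delta_3>0$ and form a conjugate pair if $\Delta_3<0$; when $\Delta_4=\Delta_3=0$, only further-degenerate all-real configurations remain once the two-conjugate-pair options are removed by the smallness condition (triple plus simple root, two double roots, or a quadruple root), so $\mathfrak{P}$ has no non-real root. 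Combining the two cases gives that, under the smallness condition, $\mathfrak{P}$ has a non-real root if and only if $\Delta_4<0$ or ($\Delta_4=0$ and $\Delta_3<0$); together with Theorem~\ref{th:contact} this is the assertion of Corollary~\ref{co:discriminant}.

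The main obstacle is the bookkeeping in the degenerate stratum $\Delta_4=0$: one must be sure that the condition $\Delta_3<0$ really isolates the configuration ``real double root plus one conjugate pair'' from all the all-real degenerate configurations, and this uses essentially the exclusion of two conjugate pairs inherited from the smallness condition via the proof of Theorem~\ref{th:contact}. A secondary technicality is matching the explicit polynomials $\Delta_3,\Delta_4$ written above with the normalised members of the discrimination sequence in \cite{emeris,yang}, with the correct signs and positive scalar factors. If one prefers not to quote the Complete Discrimination System, an equivalent route is to check directly that $\Delta_3\geq 0$ whenever $\mathfrak{P}$ has four real roots, and then to evaluate $\Delta_3$ on the factored form $c_4(\lambda-r)^2h(\lambda)$ to see that it is negative exactly when $h$ has non-real roots.
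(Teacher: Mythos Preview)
Your argument is correct and follows essentially the same route as the paper: both reduce to Theorem~\ref{th:contact} and the Segre-type list in Table~\ref{table:eigenvalues-contact} to see that transversal contact corresponds exactly to the configurations $[1111]_2$ and $[(11)11]_1$, and then identify these with the stated discriminant conditions. The only difference is one of detail: the paper simply cites \cite{yang} for the correspondence between these two Segre types and the conditions $\Delta_4<0$, respectively $\Delta_4=0\wedge\Delta_3<0$, whereas you partially rederive this (via the product formula for $\Delta_4$ in the simple-root case and the Complete Discrimination System in the degenerate case).
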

\begin{proof}
In the proof of Theorem~\ref{th:contact} we saw that the possible Segre types for the transversal contact are $[1111]_2$ or $[(11)11]_1$. Following \cite{yang}, the Segre type $[1111]_2$ is determined by $\Delta_4<0$ and  $[(11)11]_1$ by $\Delta_4=0$ and $\Delta_3<0$ (see Table~\ref{table:eigenvalues-contact}).
\end{proof}

\begin{table}[H]
	\centering
	\begin{tabular}{|p{3cm}|ll|c|l|@{\vrule height 10pt depth 8pt width 0pt}}
		\hline
		{\bf Type of contact} & \multicolumn{2}{c|}{\bf Segre type} & {\bf Roots} & {\makebox[2.5cm][c]{$\mathbf{\Delta_{3,4}}$}}  \\
		
		\hline
		
		\multirow[c]{2}{*}[-0.5em]{No contact} & $[1111]_4$ & $[(11)11]_3$ & \multirow[c]{2}{*}[-0.5em]{$\mathbb{R}$} & $\Delta_4> 0$ or\\	
		& $[(111)1]_2$ & $[(11)(11)]_2$ & & $\Delta_4=0 \wedge \Delta_3\geq 0$\\
		
		\hline
		
		\multirow[c]{2}{3cm}[-0.7em]{1 isolated	tangent point} & $[211]_3$ & $[(21)1]_3$ & \multirow[c]{2}{*}[-0.5em]{$\mathbb{R}$} & \multirow[c]{2}{*}[-0.7em]{$\Delta_4=0 \wedge \Delta_3\geq 0$}\\
		& $[2(11)]_2$ & $[(211)]_1$ & & \\
		
		\hline
		
		2 isolated tangent points & \raisebox{-4pt}{$[(11)11]_3$} & \raisebox{-4pt}{$[2(11)]_2$}  & \raisebox{-4pt}{$\mathbb{R}$} & \raisebox{-4pt}{$\Delta_4=0 \wedge \Delta_3\geq 0$}\\
		
		\hline
		
		\multirow[c]{2}{3cm}[-0.5em]{Curve with no tangent points} & \multirow[c]{2}{*}[-0.5em]{$[1111]_2$} & \multirow[c]{2}{*}[-0.5em]{$[(11)11]_1$} & \multirow[c]{2}{*}[-0.5em]{$\mathbb{C}$} & $\Delta_4< 0$ or\\
		& & & & $\Delta_4=0 \wedge \Delta_3< 0$\\
		
		\hline
		
		Curve of tangency & $[(111)1]_2$ & $[(11)11]_3$  & $\mathbb{R}$&  $\Delta_4=0 \wedge \Delta_3\geq 0$\\
		
		\hline
		
	\end{tabular}
	\medskip
	\caption{\small Relations between the type of contact and the characteristic roots: Segre type, real nature of the roots and discriminants.}
	\label{table:eigenvalues-contact}
\end{table}

The possibility (1) in Corollary~\ref{co:discriminant} is more likely to appear in real world applications than the possibility (2), since the later appears only with a double real root and the set of this configuration has zero measure in the total space. However it has to be taken into account for the implementation of a collision detection algorithm. The following example illustrates this phenomenon. 

\begin{example}
We consider the ellipsoid $2x^2+2y^2+3z^2-1=0$ and the elliptic paraboloid $x^2+y^2+8z=0$. The characteristic polynomial is $\mathfrak{P}(\lambda)=-(1+2\lambda)^2(16+3\lambda^2)$, so there is a real double root $-\frac12$ and complex conjugate roots $\pm \frac{4 i}{\sqrt{3}}$. The discriminants satisfy:
\[
\Delta_4=0 \text{ and } \Delta_3=-400,
\]
as in the possibility (2) of Corollary~\ref{co:discriminant}. Thus, we conclude that the two quadrics are in transversal contact, as shown in Figure~\ref{fig:examplo-disciminants}. Notice that, in virtue of Remark~\ref{re:necessary-condition}, we do not need to check that the smallness condition is satisfied.

\begin{figure}
	\includegraphics[height=2cm]{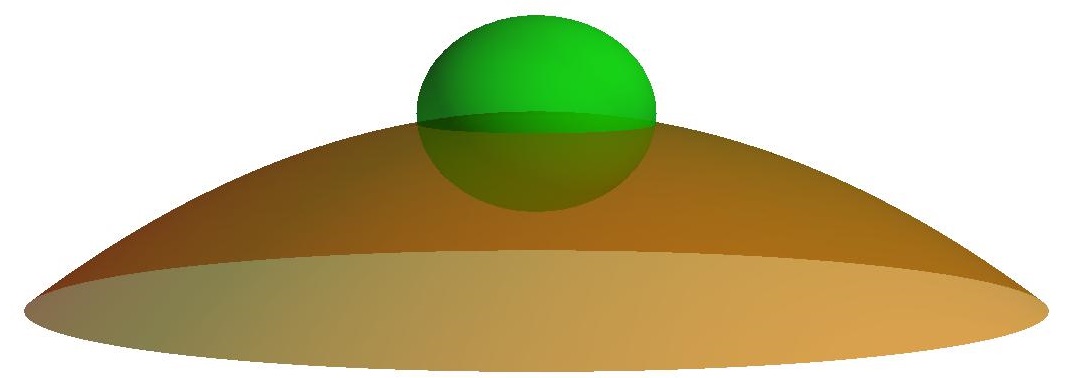}
	\caption{Contact is detected in terms of discriminants of the characteristic polynomial.}
	\label{fig:examplo-disciminants}
\end{figure}

\end{example}

\section{Relative positions of a small ellipsoid and another quadric}\label{sect:relative-positions}

When working with two objects in real world applications, sometimes it is important not only to detect contact, but also to know the relative position between them. Especially if the detection of contact involves solid objects whose border is modeled with surfaces, during a simulation it is a necessary task to detect when the smaller body is in the interior of the other. Thus, as a first step, Theorem~\ref{th:contact} allows to detect transversal contact between the surfaces, but if there is no contact, it would be desirable to know the relative position between the quadrics, this is, in which region of the space determined by the quadric $\mathcal{Q}$ is placed the small ellipsoid $\mathcal{E}$. This problem was already solved in \cite{BV-PS-SS-TT} for a small ellipsoid and an elliptic paraboloid.

We are considering quadrics $\mathcal{Q}$ which divide the projective space into two connected regions. Thus, considering the matrix $Q$ associated to the quadric $\mathcal{Q}$ and working in homogeneous coordinates $X=(x,y,z,1)$ as before, we distinguish regions $R_-$ and $R_+$ given by
\begin{equation}\label{eq:int-ext-regions}
R_-=\{(x,y,z)\in \mathbb{R}^3/ 	X^TQX\leq 0\} \text{ and } R_+=\{(x,y,z)\in \mathbb{R}^3 /X^TQX\geq 0\}.
\end{equation} 

Since we are working in affine space, these regions are not always connected, as it occurs with the hyperboloid of two sheets, where $R_-$ has two connected components (see Figure~\ref{fig:Interior-exterior}).
Moreover, in some cases, depending on the quadric $\mathcal{Q}$, we intuitively identify $R_-$ with the interior region and $R_+$ with the exterior one (for example, if we consider an ellipsoid). However this terminology is not so convenient for other quadrics, as the hyperbolic paraboloid. Therefore we will refer to these regions as $R_-$ and $R_+$. Also, note that we do not use strict inequalities in the definition of the two regions. Hence we allow tangent contact and still say that $\mathcal{E}$ belongs to $R_-$ or $R_+$. We emphasize that the intersection of  $R_-$ and $R_+$ is not empty, indeed, the two regions intersect in the points of the quadric surface.

The purpose of this section is to identify the relative position of a small ellipsoid $\mathcal{E}$ with respect to another quadric $\mathcal{Q}$. This relative position is considered from a topological viewpoint, so we are interested in detecting the region of the space divided by $\mathcal{Q}$ in which $\mathcal{E}$ is located. From Corollary~\ref{co:discriminant} we know how to detect contact in terms of the discriminants of the characteristic polynomial $\mathfrak{P}(\lambda)$. Thus, in what follows, we assume there is no contact, or just a tangent contact, so that either all points of $\mathcal{E}$ are located in $R_-$ or all of them belong to $R_+$. Our objective is to know in which of them are they in terms of the sign of the characteristic roots. Since the possible quadric surfaces $\mathcal{Q}$ have different features, we show in the next lemma how to deal with the hyperboloid of two sheets as a sample case. For other quadrics we proceed in an analogous way and we omit details in the interest of brevity (see Theorem~\ref{th:rel-pos} below).

\begin{lemma}\label{le:elip-hip-int-ext}
	Let $\mathcal{E}$ be a small ellipsoid and $\mathcal{H}$ a hyperboloid of two sheets. Then 
	\begin{itemize}
		\item $\mathcal{E}$ is placed in $R_-$ if and only if $\mathfrak{P}(\lambda)$ has four negative real roots.
		\item $\mathcal{E}$ is placed in $R_+$ if and only if $\mathfrak{P}(\lambda)$ has two negative and two positive real roots.
	\end{itemize}
\end{lemma}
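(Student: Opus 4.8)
The plan is to exploit the fact that, since $\mathcal{E}$ is small with respect to $\mathcal{H}$, the previous results forbid intersection curves and curves with tangent points; hence when there is no transversal contact the ellipsoid lies entirely in one of the closed regions $R_-$ or $R_+$, and the sign distribution of the roots of $\mathfrak{P}(\lambda)=\det(\lambda E+Q)$ is a rigid-motion invariant that varies continuously on each region-component of configuration space. So it suffices to (i) identify the sign pattern of the roots in one representative position of each type, and (ii) argue that the pattern cannot change as long as $\mathcal{E}$ stays (strictly) in $R_-$, respectively in $R_+$.

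For step (i), I would place $\mathcal{H}$ in standard form $\frac{x^2}{a^2}+\frac{y^2}{b^2}-\frac{z^2}{c^2}=-1$ and pick a tiny sphere $\mathcal{E}$ (radius $\varepsilon$) deep inside one sheet, i.e. centered at a point $(0,0,z_0)$ with $z_0$ large; the matrices are then explicit and $\mathfrak{P}(\lambda)$ factors, so one reads off that all four characteristic roots are real and negative. Next I place the same small sphere far out in $R_+$, say at a point with large $x$-coordinate on the "equatorial" side; again $\mathfrak{P}(\lambda)$ is computed explicitly and one checks the roots are real with exactly two negative and two positive. Smallness (Table~\ref{table:smallness-condition-canonical-form}) guarantees such placements genuinely have no transversal contact, so these are legitimate representatives.

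For step (ii), the key point is that the characteristic roots are real throughout (by Theorem~\ref{th:contact}, complex roots would signal transversal contact, which is excluded in $R_-$ and in $R_+$), and a root of $\mathfrak{P}$ can change sign only by passing through $0$. But $\mathfrak{P}(0)=\det(Q)\neq 0$ is a fixed nonzero number independent of the position of $\mathcal{E}$, so no root ever crosses $0$: the count of negative roots is locally constant, hence constant on each path-component of the "no transversal contact" locus that meets the interior of $R_-$ (resp. $R_+$). One then needs that the set of positions with $\mathcal{E}$ strictly inside a single sheet of $R_-$ is connected (translate and shrink toward the axis), and likewise for $R_+$ (the exterior region of a two-sheeted hyperboloid is connected), so that every admissible position is joined to the representative chosen in (i). This yields both implications at once, since the two sign patterns $(-,-,-,-)$ and $(-,-,+,+)$ are distinct and exhaustive among real-rooted quartics arising here.

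The main obstacle I anticipate is the connectivity bookkeeping in step (ii): one must be careful that moving the small ellipsoid from an arbitrary position in $R_+$ to the chosen faraway representative can be done without ever creating transversal contact (one may first shrink $\mathcal{E}$ homothetically, which only decreases its chance of touching $\mathcal{H}$, then translate freely in the connected exterior, then no longer needs to regrow it since the root-sign pattern is scale-invariant as well — $\mathfrak{P}$ for $\lambda E+Q$ and for $\lambda(tE)+Q$ have roots differing only by the positive factor $1/t$). The analogous argument for $R_-$ must additionally observe that although $R_-$ has two connected components, they are exchanged by a rigid motion of $\mathcal{H}$, so the root count is the same in both. Handling the borderline tangent-contact positions is then immediate, since we used closed regions and the sign count is inherited by continuity from nearby strict positions.
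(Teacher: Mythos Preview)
Your approach is essentially the paper's: reduce via rigid-motion and scaling invariance, compute the sign pattern at one representative position in each region, and propagate by continuity using $\mathfrak{P}(0)=\det Q\neq 0$ together with connectedness of $R_+$ and (projectively, or by symmetry) of $R_-$. The paper's execution is slightly tidier in two respects: it invokes invariance of the characteristic roots under affine scalings of space (from \cite{wang-wang-kim}) to turn $\mathcal{E}$ into a unit sphere outright, and it places that sphere at the specific points $(0,0,0)\in R_+$ and $(0,0,c+1)\in R_-$, where $\mathfrak{P}$ factors completely as $-(\lambda-1)(a^2\lambda+1)(b^2\lambda+1)(c^2\lambda-1)$ and $-(a^2\lambda+1)(b^2\lambda+1)(c\lambda+1)^2$ respectively; note that your parenthetical $E\mapsto tE$ does not model geometric shrinking (it leaves the zero set of $X^TEX$ unchanged), though your continuity argument already suffices to cover deformations of both position and shape.
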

\begin{proof}
First note that, if $\mathcal{E}$ is not completely within $R_-$ or $R_+$, then it is in transversal contact with $\mathcal{Q}$ and, by Theorem~\ref{th:contact}, there are non-real characteristic roots. Hence, let $\mathcal{E}$ be a small ellipsoid which is not in transversal contact with a hyperboloid of two sheets $\mathcal{H}$. Since the relative position is invariant under rigid moves, as are the roots of the characteristic polynomial (see \cite{wang-wang-kim}), we can locate $\mathcal{H}$ so that it is in standard form as in Table~\ref{table:smallness-condition-canonical-form} and its associated matrix is diagonal: $H=\operatorname{diag}\{\frac{1}{a^2},\frac{1}{b^2},-\frac{1}{c^2},1\}$. Notice also that by applying a rigid transformation the quadric still satisfy the smallness condition. 

We are going to place the center of the ellipsoid at two particular points and then argue using continuity to extend the result. 
Since both the relative position and the roots of the characteristic polynomial are invariant under scalings (see \cite{wang-wang-kim} for details), we firstly place the ellipsoid in $R_+$ so that the center is at $(0,0,0)$ (see Figure~\ref{fig:Interior-exterior}(b)) and secondly place it in $R_-$ so that it is tangent to the vertex of $\mathcal{H}$ (see Figure~\ref{fig:Interior-exterior}(c)). Now, appropriate scalings let us transform $\mathcal{E}$ into a sphere $\mathcal{S}$ of radius $1$. Note that the necessary scalings of the space that transform the ellipsoid into such a sphere also transform $\mathcal{H}$, but it is still of the given generic form.	
The equation of the sphere $S$ with center $(x_c, y_c, z_c)$ is 
	\begin{equation}\label{eq:sphere}
		(x-x_c)^2 + (y-y_c)^2 + (z-z_c)^2 = 1,
	\end{equation}
so the associated matrix is 
	\begin{equation}\label{eq:matrix-sphere}
		S=\left(   \begin{array}{cccc}
			1 & 0 & 0 & -x_c \\
			0 & 1 & 0 & -y_c  \\
			0 & 0 & 1 & -z_c \\
			-x_c & -y_c & -z_c & -1+x_c^2+y_c^2+z_c^2
		\end{array} \right ).
	\end{equation}
A direct calculation shows that the characteristic polynomial $\mathfrak{P}(\lambda)$ is given by
\begin{equation}\label{eq:charpoly-esfera-hip}
	\begin{array}{rcl}
	\mathfrak{P}(\lambda)\!\!&=\!\!&-\lambda^4+\left(\frac{x_c^2-1}{a^2}+\frac{y_c^2-1}{b^2}+\frac{1-z_c^2}{c^2}+1\right)\lambda^3\\
	\noalign{\smallskip}
	&& +\left(\frac{c^2 \left(a^2+b^2-1\right)+a^2+b^2-a^2
		b^2-\left(a^2+b^2\right) z_c^2+\left(c^2-a^2\right) y_c^2+\left(c^2-b^2\right) x_c^2}{a^2 b^2 c^2}\right)\lambda^2\\
	\noalign{\smallskip}
	&& -\frac{a^2+b^2-c^2+x_c^2+y_c^2+z_c^2-1}{a^2 b^2 c^2} \lambda-\frac{1}{a^2 b^2 c^2}.
	\end{array}
\end{equation}

Firstly, we consider the center of the sphere to be $(x_c,y_c,z_c)=(0,0,0)$, so $S$ is placed in $R_+$, and we see that  $a^2 b^2 c^2\mathfrak{P}(\lambda)=-(\lambda -1) \left(a^2 \lambda +1\right) \left(b^2 \lambda +1\right) \left(c^2 \lambda-1\right)$. Hence, there are 2 positive and 2 negative roots in this particular position. 

Secondly, we consider the center of the sphere to be $(x_c,y_c,z_c)=(0,0,c+1)$, so $S$ is placed in $R_-$, and we check that $a^2	b^2 c^2 \mathfrak{P}(\lambda)=-\left(a^2 \lambda +1\right) \left(b^2 \lambda +1\right) (c \lambda +1)^2$. Hence, there are 4 negative characteristic roots in this particular position.

Now, since $\mathfrak{P}(0)=-\frac{1}{a^2 b^2 c^2}\neq 0$, we have that $0$ is not a characteristic root. Given that the characteristic roots are real and vary continuously as we move the ellipsoid $\mathcal{E}$ and $0$ is not a root, the sign of the roots cannot change while we move $\mathcal{E}$ within $R_+$ or $R_-$. Thus,  because $R_+$ is connected, we conclude that there are 2 positive and 2 negative roots if $\mathcal{E}\in R_+$. Since $R_-$ has two connected components, we pass to the projective space (or simply repeat the previous calculation for $(x_c,y_c,z_c)=(0,0,-c-1)$), to conclude that there are 4 negative roots if $\mathcal{E}\in R_-$.
\end{proof}
 
\begin{figure}
	\includegraphics[height=3cm]{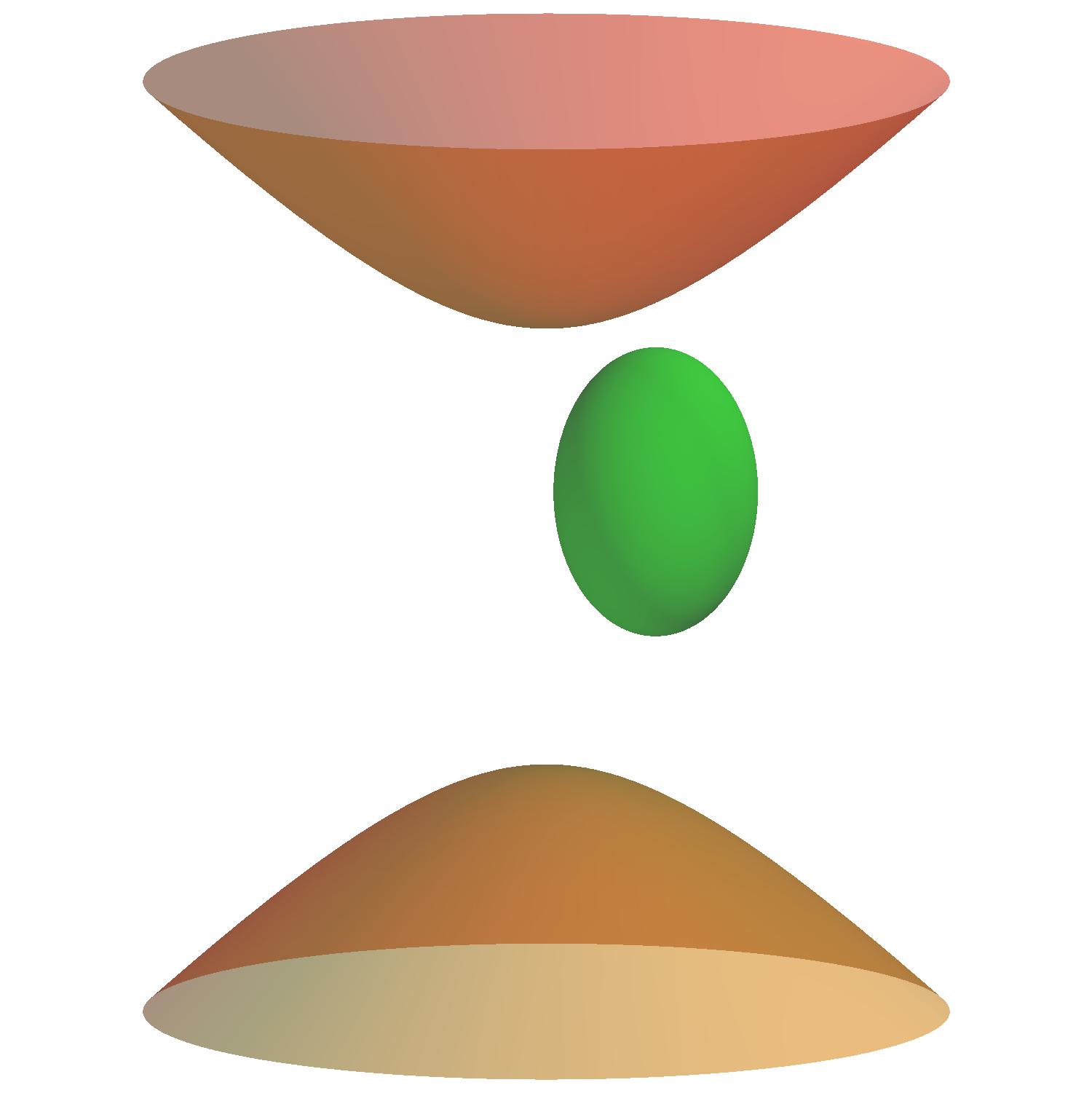}
	\includegraphics[height=3cm]{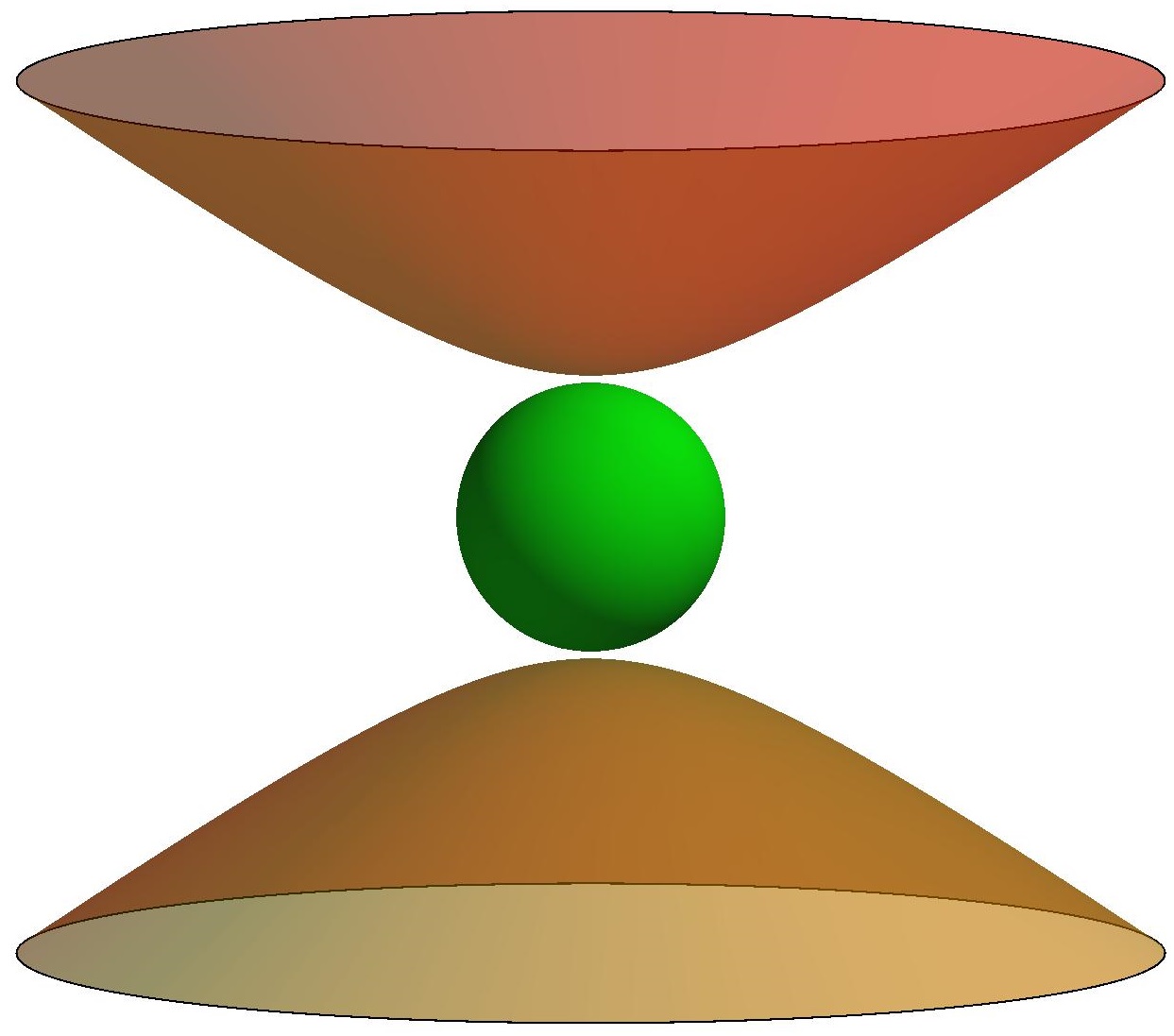}	\includegraphics[height=3cm]{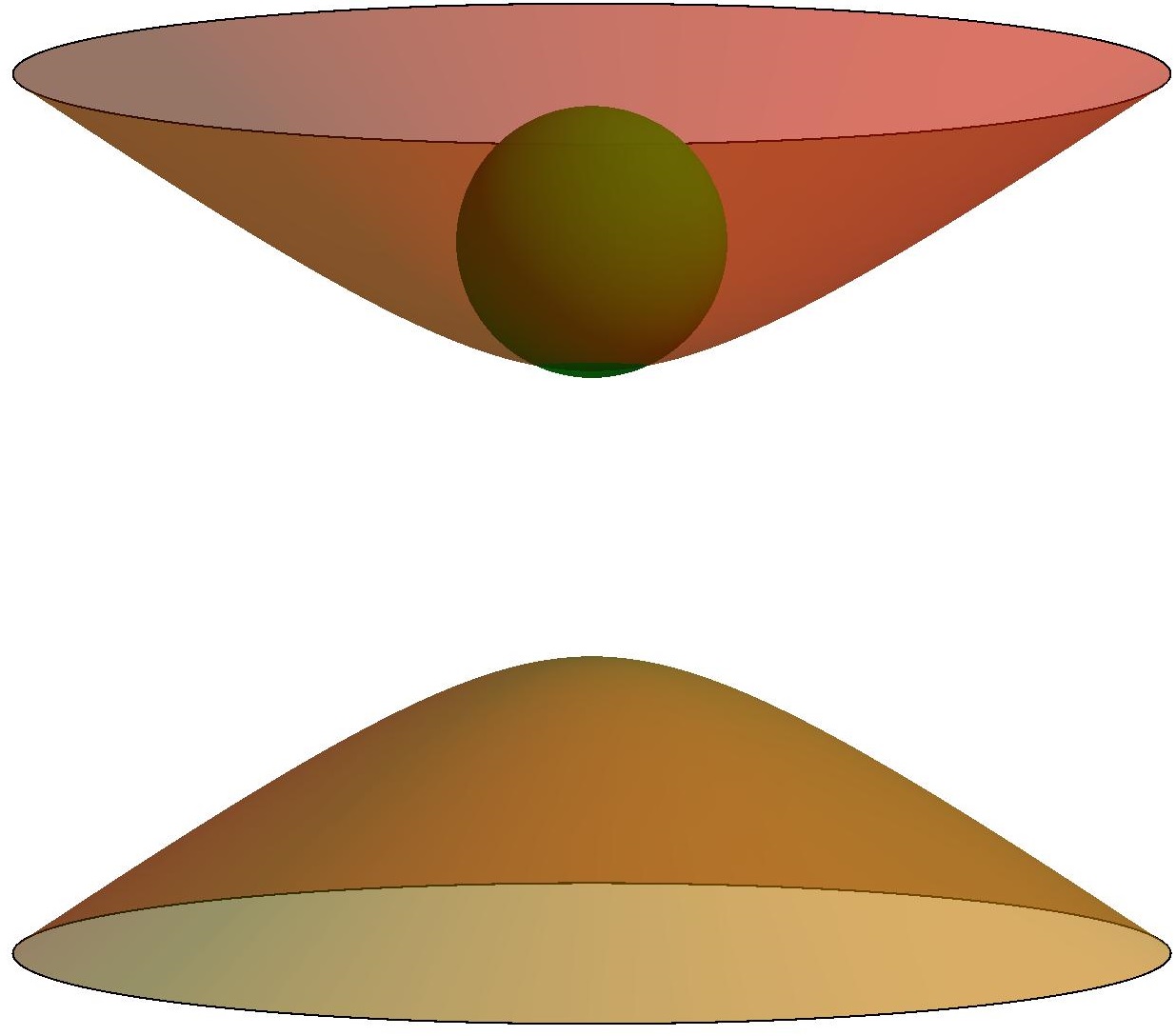}
		\qquad\qquad\qquad\qquad\qquad\qquad\qquad\qquad (a) \qquad\qquad\qquad\qquad\qquad (b) \qquad\qquad\qquad\qquad\qquad (c)
	\caption{The sign of the characteristic roots allows to distinguish interior and exterior cases.}\label{fig:Interior-exterior}
\end{figure}

Recall the form of the characteristic polynomial associated to the pair of quadrics $\mathcal{E}$ and $\mathcal{Q}$ given in expression \eqref{eq:char-poly-coef}. The signs of the roots are related to these coefficients, so that one can 
 distinguish the relative position just by checking the sign of the coefficients, without even computing the characteristic roots. This is shown in the following result.

\begin{theorem}\label{th:rel-pos}
	Let $\mathcal{E}$ be a small ellipsoid with respect to another quadric $\mathcal{Q}$. The relative position of the two quadrics for the non-contact (possibly tangent) cases is detected by the sign of the characteristic roots or, alternatively, the sign of the coefficients of $\mathfrak{P}(\lambda)$, as shown in Table~\ref{table:rel-pos}. 
\end{theorem}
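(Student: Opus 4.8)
The plan is to reproduce, for every admissible quadric $\mathcal{Q}$, the strategy used for the hyperboloid of two sheets in Lemma~\ref{le:elip-hip-int-ext}, and then to translate the information about the signs of the roots of $\mathfrak{P}(\lambda)$ into the language of the signs of its coefficients. First I would normalize: the relative position of the two surfaces, the characteristic roots, and the signs of the coefficients of $\mathfrak{P}(\lambda)$ are all invariant under rigid motions and positive scalings of space (the coefficients only up to a common positive factor, as recorded in \cite{wang-wang-kim}), and the smallness hypothesis is likewise preserved. So I may assume $\mathcal{Q}$ is in one of the standard forms of Table~\ref{table:smallness-condition-canonical-form} with diagonal matrix $Q$, and that $\mathcal{E}$ has been rescaled to a unit sphere $\mathcal{S}$ with centre $(x_c,y_c,z_c)$ and matrix $S$ as in \eqref{eq:matrix-sphere}; then $\mathfrak{P}(\lambda)=\det(\lambda S+Q)$ has coefficients that are explicit polynomials in $x_c,y_c,z_c$ and the parameters of $\mathcal{Q}$, exactly as in \eqref{eq:charpoly-esfera-hip}. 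When $\mathcal{Q}$ is a cylinder or a pair of parallel planes, $Q$ is singular, so $\mathfrak{P}(\lambda)=\lambda^{m}g(\lambda)$ with $m=\operatorname{corank}(Q)$ and $g(0)\neq 0$, and there I would work with the reduced polynomial $g$ and carry the $m$ forced zero roots separately.

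Next, for each type of $\mathcal{Q}$, I would place $\mathcal{S}$ at one convenient sample position in every connected component of the region $R_-$ and of the region $R_+$ — the centre of symmetry, a vertex, or a far point on an axis, precisely as in Lemma~\ref{le:elip-hip-int-ext} — substitute into $\mathfrak{P}(\lambda)$, factor it, and read off how many characteristic roots are positive and how many are negative. To propagate these counts to all positions I would use the continuity argument of that lemma: by Theorem~\ref{th:contact} all characteristic roots are real whenever $\mathcal{E}$ is not in transversal contact with $\mathcal{Q}$; for non-degenerate $\mathcal{Q}$ one has $\mathfrak{P}(0)=\det(Q)\neq 0$ (a fixed constant), and for degenerate $\mathcal{Q}$ a short computation shows that the lowest non-vanishing coefficient $g(0)$ is again a nonzero constant, because $\ker Q$ is spanned by spatial directions on which the matrix of a sphere restricts to the identity; hence no nonzero root can cross $0$ while the centre of $\mathcal{S}$ moves within the closed region $R_-$ (respectively $R_+$), so the numbers of positive and negative roots are locally constant and therefore constant on each connected component. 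This fills in the column of Table~\ref{table:rel-pos} describing the signs of the characteristic roots.

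Finally I would pass from root signs to coefficient signs. In every non-contact case $\mathfrak{P}(\lambda)$ (or $g(\lambda)$ in the degenerate cases) has only real roots, and for such polynomials Descartes' rule of signs is exact: the number of positive roots equals the number of sign changes in the coefficient sequence $(c_4,c_3,c_2,c_1,c_0)$, the number of negative roots equals the number of sign changes in $(c_4,-c_3,c_2,-c_1,c_0)$, and the two numbers add up to the degree. Therefore the pair (number of positive roots, number of negative roots) — and hence the relative position — can be recovered from the sign pattern of the coefficients alone; conversely that sign pattern is determined by the same pair together with the fixed sign of the leading coefficient $c_4=\det(E)$. Recording this correspondence for each $\mathcal{Q}$ yields the coefficient-sign column of Table~\ref{table:rel-pos}.

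I do not anticipate a conceptual obstacle; the effort is the case-by-case verification over the nine types of $\mathcal{Q}$. The points that need care, both already visible in Lemma~\ref{le:elip-hip-int-ext}, are: (i) some of the regions $R_\pm$ are disconnected in affine space — $R_-$ for a hyperboloid of two sheets, and one region each for a hyperbolic paraboloid, a parabolic cylinder and a pair of parallel planes — which I would handle either by passing to $\mathbb{R}P^3$, where they become connected, or by checking one sample per affine component and confirming that the components give the same count; and (ii) the degenerate quadrics, where $\det(Q)=0$ forces $0$ to be a characteristic root, handled by the factorization $\mathfrak{P}=\lambda^{m}g$ of the first step. One should also check that within a connected component of $R_-$ (or $R_+$) any admissible position of $\mathcal{S}$ can be joined to the chosen sample by a path along which $\mathcal{S}$ stays in the closed region, so that the continuity argument applies to the whole component; this is clear for the convex regions and routine for the remaining bowl- or tube-shaped ones.
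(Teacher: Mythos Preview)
Your proposal is correct and follows essentially the same route as the paper: extend the sample-point-plus-continuity argument of Lemma~\ref{le:elip-hip-int-ext} to each admissible $\mathcal{Q}$ (handling the degenerate cases by factoring out the forced zero roots and the disconnected regions via $\mathbb{R}P^3$ or extra samples), and then convert root signs to coefficient signs by Descartes' rule. The paper's own proof is terser but invokes exactly these ingredients, so there is nothing to add.
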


\begin{table}[H]
	\begin{tabular}{|p{3.2cm}|p{2cm}|p{1.1cm}|l|}
		
		\hline
		
		\raisebox{-8pt}{\bf Quadric $\mathcal Q$}\rule[-0.4cm]{0cm}{1cm} & {\bfseries Relative position} & {\bfseries Sign of roots} & \raisebox{-8pt}{\makebox[4cm][c]{\bf Coefficients of $\mathfrak{P}$}}\\
		
		\hline\hline
		
		\multirow[c]{2}{*}[-0.5em]{ Ellipsoid} & \multirow[c]{3}{*}{ $X^TQX\leq 0$} & \multirow[c]{3}{*}{$----$} & $c_4, c_0 <0$ \\
		& & & No sing changes \\
		\multirow[c]{2}{3cm}{ Hyperboloid of two sheets} & & &$c_i\leq 0$ for $i=1,2,3$  \\\cline{2-4}
		
		& \multirow[c]{3}{*}{$X^TQX\geq 0$} & \multirow[c]{3}{*}{$--++$} & $c_4, c_0<0$ \\
		\multirow[c]{2}{*}{ Elliptic paraboloid} & & & $2$ sign changes  \\
		& & & $c_i>0$ for some $i=1,2,3$  \\
		
		\hline\hline
		
		\multirow[c]{3}{3cm}{Hyperboloid of 1 sheet} & \multirow[c]{3}{*}{$X^TQX\leq 0$} & \multirow[c]{3}{*}{$---+$} & $c_4<0, c_0>0$\\
		& & & No sign changes  \\ 
		& & & $c_i>0\Rightarrow c_j\geq 0 \forall j>i$ \\\cline{2-4}
		
		\multirow[c]{3}{3cm}{Hyperbolic paraboloid} & \multirow[c]{3}{*}{$X^TQX\geq 0$} & \multirow[c]{3}{*}{$-+++$} & $c_4<0, c_0>0$\\
		& & & $3$ sign changes \\	 
		& & & $c_i\leq 0, c_j>0$ for some $i<j$ \\
		
		\hline\hline
		
		\multirow[c]{3}{3cm}{Elliptic cylinder} & \multirow[c]{3}{*}{$X^TQX\leq 0$}  & \multirow[c]{3}{*}{$---0$} & $c_4<0, c_0=0$\\
		& & & No sign changes \\
		& & & $c_i\leq 0\forall i=1,2,3$ \\\cline{2-4}
		
		\multirow[c]{3}{3cm}{Parabolic cylinder} & \multirow[c]{3}{*}{$X^TQX\geq 0$} & \multirow[c]{3}{*}{$-0++$} & $c_4<0, c_0=0$ \\
		& & & $1$ sign change \\
		& & & $c_i>0$ for some $i=1,2,3$ \\
		
		\hline\hline
		
		\multirow[c]{6}{3cm}{Hyperbolic cylilnder} & \multirow[c]{3}{*}{$X^TQX\leq 0$} & \multirow[c]{3}{*}{$--0+$} & $c_4<0, c_0=0$ \\
		& & & $1$ sign change\\
		& & & $c_i<0\Rightarrow c_j\leq 0\forall j>i$\\\cline{2-4}
		
		& \multirow[c]{3}{*}{$X^TQX\geq 0$} & \multirow[c]{3}{*}{$0+++$} & $c_4<0, c_0=0$ \\
		& & & No sign changes\\
		& & & $(-1)^ic_i\leq 0 \forall i=1,2,3$\\
		
		\hline\hline
		
		\multirow[c]{4}{3cm}{Two parallel planes} & \multirow[c]{2}{*}{$X^TQX\leq 0$} & \multirow[c]{2}{*}{$--00$} & $c_i<0$ for $i=2,3,4$ \\
		& & & $c_0 = c_1 =0$\\\cline{2-4}
		
		& \multirow[c]{2}{*}{$X^TQX\geq 0$} & \multirow[c]{2}{*}{$00++$} & $(-1)^ic_i\leq 0$ for $i=2,3,4$\\
		& & & $c_0 = c_1 =0$\\
		
		\hline
		
	\end{tabular}
	\medskip
	\caption{Relative positions in terms of the sign of the roots and the sign of the coefficients of the characteristic polynomial.}
	\label{table:rel-pos}
\end{table}

\begin{proof}
The relation of the sign of the characteristic roots with the relative positions follows from Lemma~\ref{le:elip-hip-int-ext} if $\mathcal{Q}$ is a hyperboloid of two sheets. The argumentation in this lemma goes through if we change the hyperboloid of two sheets by any other non degenerate quadric $\mathcal{Q}$, simply by adapting the points we choose and the particular calculations. We do not include details in the interest of brevity.  If the quadric $\mathcal{Q}$ is degenerate instead, then $0$ is a characteristic root. In this case, one can reduce the dimension appropriately and a similar argument applies to obtain the remaining characterizations in Table~\ref{table:rel-pos}.  

In expression \eqref{eq:char-poly-coef}, we have that  $c_4=\det(\mathcal{E})$ and $c_0=\det (\mathcal{Q})$. Hence, some relations for the sign of the coefficients of $\mathfrak{P}$ are known a priori. Thus $c_4<0$ and, depending on the quadric $\mathcal{Q}$, $c_0>0$ or $c_0<0$ in the non-degenerate case, whereas $c_0=0$ for cylinders and $c_1=c_0=0$ for parallel planes.
Since the sign of the roots determines the relative position of the two quadrics, a direct application of the Descartes' rule of signs (see, for example, \cite{curtiss}) provides the relations in the signs given in Table~\ref{table:rel-pos}. 
\end{proof}

\section{Applications to detect contact between an ellipsoid and a combination of quadrics}\label{sect:combined-quadrics}

The results given in Theorem~\ref{th:contact}, Corollary~\ref{co:discriminant} and Theorem~\ref{th:rel-pos} provide a simple way of detecting contact and, moreover, the relative positions between an ellipsoid and another quadric. From these mathematical results efficient algorithms can be derived to be applied in real world applications. The simplicity of the analysis of the discriminant or the coefficients of the characteristic polynomial suits applications in many different contexts, so we are not going to specify a particular one here, but describe a general procedure of how to use the previous results. Here we consider static positions for the surfaces, however further analysis from the results in the previous sections can be carried out following the line of other works as \cite{etayo2006,X2011,Jia-Wang-Choi-Mourrain-Tu1} that deal with continuous collision detection.

\subsection{Using a plane to separate quadrics}\label{subsect:plane}
In virtue of Theorem~\ref{th:contact}, where a way of detecting contact is given for a variety of surfaces, one can combine different quadrics to create a model of an object. The combination of these quadrics can be done directly and one can even use one quadric to divide the space and differentiate two different zones where contact has to be checked with two different surfaces. Thus the detection of contact with an small ellipsoid is going to be carried out pairwise.

However, a simpler model can be done if we use a plane to separate zones. Since the intersection of a plane with a quadric surface is generically a conic curve, because of its manageability, it is a convenient way of creating a model of an object. Thus, one can divide space with a plane, check in which zone is the small ellipsoid and then detect contact with the corresponding quadric.

Notice that a plane can be represented by a matrix, in a similar way as a quadric, with no terms of order $2$. Hence, it makes sense to consider the characteristic polynomial \eqref{eq:char-poly} for an ellipsoid and a plane. The following result extends Theorem~\ref{th:contact} to the simpler case of an ellipsoid (with no restrictions in size or shape) and a plane. Moreover, it also extends Theorem~\ref{th:rel-pos} if we take into account that if $P$ denotes the matrix of the plane, then $R_-=\{(x,y,z)\in \mathbb{R}^3/ 	X^TPX\leq 0\}$ and $R_+=\{(x,y,z)\in \mathbb{R}^3 /X^TPX\geq 0\}$ as in \eqref{eq:int-ext-regions}.

\begin{theorem}\label{th:plane}
	Let $\mathcal{E}$ be an ellipsoid and $\mathcal{P}$ a plane. Then  $\mathcal{E}$ and $\mathcal{P}$ are in transversal contact if and only if the characteristic polynomial $\mathfrak{P}(\lambda)$ has two non-real roots.
	
	Furthermore, there is a double root which is zero and, if there is no transversal contact, the other two roots are positive if $\mathcal{E}\subset R_+$ whereas they are negative if $\mathcal{E}\subset R_-$. 
\end{theorem}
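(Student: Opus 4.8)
The plan is to compute the characteristic polynomial $\mathfrak{P}(\lambda)=\det(\lambda E + P)$ explicitly after placing the configuration in a convenient normalized position, using the invariance of the roots under rigid motions and scalings (as cited from \cite{wang-wang-kim}). Since a plane matrix $P$ has vanishing quadratic part, the matrix $P$ has rank at most $2$, so $\lambda E + P$ reduces to $\lambda E$ modulo a rank-$2$ perturbation; hence $\det(\lambda E+P)$ is divisible by $\lambda^2$, which immediately gives the claimed double root at $0$. The remaining quadratic factor $q(\lambda)$ is what we must analyze.

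First I would reduce to a normal form: apply a rigid motion so that $\mathcal{P}$ becomes the plane $z=0$ with matrix having a single nonzero off-diagonal-block entry, and apply a scaling so that $\mathcal{E}$ is the unit sphere $S$ with center $(0,0,h)$ (as in the proof of Lemma~\ref{le:elip-hip-int-ext}, the scaling that turns $\mathcal{E}$ into a sphere is harmless because the plane remains a plane). With $S$ given by \eqref{eq:sphere}--\eqref{eq:matrix-sphere} specialized to $(x_c,y_c,z_c)=(0,0,h)$ and $P$ the matrix of $z=0$ (say $P_{34}=P_{43}=\tfrac12$, all other entries zero, or the normalization the paper prefers), a direct $4\times 4$ determinant gives $\mathfrak{P}(\lambda)=\lambda^2\,q(\lambda)$ with $q(\lambda)$ a quadratic whose constant and linear coefficients are explicit functions of $h$. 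I expect $q(\lambda)$ to have the form (up to an overall sign fixed by $c_4=\det E<0$) something proportional to $\lambda^2 - (\text{something})\lambda + \tfrac14$ or similar, with discriminant proportional to $h^2-1$.

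Then the analysis splits into the standard trichotomy on the sign of this discriminant. Transversal contact of $\mathcal{E}$ with $\mathcal{P}$ is equivalent to the plane cutting the sphere in a genuine circle, i.e. $|h|<1$; non-contact or tangency is $|h|\geq 1$. In the contact case $|h|<1$ I would show the discriminant of $q$ is negative, so $q$ has two non-real roots, and conversely; this proves the first assertion. For the second assertion, assume $|h|\geq 1$: the two roots of $q$ are real, $0$ is not among them (because $q(0)=\det$-type quantity $\neq 0$, which I'd check directly from the formula), and by the continuity/connectedness argument already used in Lemma~\ref{le:elip-hip-int-ext} the sign of the roots is constant on each of the two connected pieces $\{h>1\}$ and $\{h<-1\}$ of the non-contact region. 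Evaluating $q$ at one representative value of $h$ in each region (e.g. $h$ large positive for $\mathcal{E}\subset R_+$, where $z>0$ gives $X^TPX>0$, and $h$ large negative for $\mathcal{E}\subset R_-$) pins down the signs: both roots positive in $R_+$, both negative in $R_-$. Finally, since $R_\pm$ is the same in the normalized picture and the original one (rigid motions and positive scalings preserve which region contains $\mathcal{E}$, and preserve the sign pattern of the roots), the conclusion transfers back.

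The main obstacle I anticipate is bookkeeping rather than conceptual: getting the normalization of the plane matrix right so that the $\lambda^2$ factor is transparent and the residual quadratic $q(\lambda)$ comes out in a clean enough form that its discriminant and the signs of its roots can be read off without case analysis. A secondary subtlety is the connectedness point when $\mathcal{P}$ is a plane: the region $R_-$ (and $R_+$) for a plane is a half-space and hence connected, so — unlike the hyperboloid of two sheets in Lemma~\ref{le:elip-hip-int-ext} — no passage to projective space is needed, and a single representative value of $h$ suffices on each side; I would make sure to remark this simplification explicitly.
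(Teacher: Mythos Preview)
Your proposal is correct and follows essentially the same route as the paper: reduce by rigid motion and scaling to a sphere centered on the $z$-axis and the plane $z=0$, compute $\mathfrak{P}(\lambda)=\lambda^2 q(\lambda)$ explicitly, and read off the conclusions from the quadratic factor. The only minor difference is that the paper writes the nonzero roots explicitly as $\tfrac12\bigl(z_c\pm\sqrt{z_c^2-r^2}\bigr)$ and checks their signs directly from this formula rather than via a continuity/representative-point argument; your rank-$2$ observation for the $\lambda^2$ factor is a clean conceptual addition not made explicit in the paper.
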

\begin{proof}
	We begin with a general ellipsoid $\mathcal{E}$ and an arbitrary plane $\mathcal{P}$. Since both the characteristic roots and the relative positions are invariant by rigid motion that preserve orientation, we can place the plane to be $xy$-plane. Moreover, by appropriate scalings, the ellipsoid can be transformed into a sphere and a new translation place it on the $z$-axis. Thus, without lost of generality, we consider a sphere $\mathcal{S}$ of radius $r$ with center at $(0,0,z_c)$ and the plane $\mathcal{P}_0$ with equation $z=0$. Their associated matrices are, respectively:
	\[
	S=
	\left(
	\begin{array}{cccc}
		\frac{1}{r^2} & 0 & 0 & 0 \\
		0 & \frac{1}{r^2} & 0 & 0 \\
		0 & 0 & \frac{1}{r^2} & -\frac{\text{zc}}{r^2} \\
		0 & 0 & -\frac{\text{zc}}{r^2} & \frac{z_c^2}{r^2}-1 \\
	\end{array}
	\right)
	\qquad \text{ and }\qquad
	P_0=\left(
	\begin{array}{cccc}
		0 & 0 & 0 & 0 \\
		0 & 0 & 0 & 0 \\
		0 & 0 & 0 & \frac{1}{2} \\
		0 & 0 & \frac{1}{2} & 0 \\
	\end{array}
	\right).
	\]
	Now, it is straightforward to check that 
	\begin{equation}\label{eq:char-poly-plane}
	\mathfrak{P}(\lambda)=-\frac{\lambda ^2 \left(r^2+4 \lambda  (\lambda -z_c)\right)}{4 r^6},
	\end{equation}
	so the characteristic roots are $\left\{0,0,\frac{1}{2} \left(z_c\pm\sqrt{z_c^2-r^2}\right)\right\}$. Since transversal contact occurs if and only if $|z_c|<r$, this is equivalent to the presence of two non-real roots.	
	
	Moreover, since $\sqrt{z_c^2-r^2}< z_c$, we have that $\frac{1}{2} \left(z_c\pm\sqrt{z_c^2-r^2}\right)>0$ if $z_c>r$ (so $\mathcal{E}\subset R_+=\{(x,y,z)\in \mathbb{R}^3 /z\geq 0\}$) and $\frac{1}{2} \left(z_c\pm\sqrt{z_c^2-r^2}\right)<0$ if $z_c<-r$ (so $\mathcal{E}\subset R_-=\{(x,y,z)\in \mathbb{R}^3 /z\leq 0\}$).
\end{proof}

As a consequence of Theorem~\ref{th:plane}, we can use the discriminant and the Descartes' rule of signs to detect the relative position between an ellipsoid and a plane as follows.

\begin{corollary}\label{co:plane}
	Let $\mathcal{E}$ be an ellipsoid and $\mathcal{P}$ a plane. Let $\mathfrak{P}(\lambda)=c_4 \lambda^4+c_3\lambda^3+c_2\lambda^2$ be the characteristic polynomial. Then
	\begin{enumerate}
		\item $\mathcal{E}$ and $\mathcal{P}$ are in transversal contact if and only if $\Delta_3<0$.
		\item If they are not in transversal contact, then
		\begin{itemize}
			\item $\mathcal{E}\subset R_+$ if and only if $c_3>0$.
			\item $\mathcal{E}\subset R_-$ if and only if $c_3<0$.
		\end{itemize} 
	\end{enumerate} 
\end{corollary}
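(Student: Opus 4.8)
The plan is to reduce the corollary to the explicit description of $\mathfrak{P}(\lambda)$ already obtained in Theorem~\ref{th:plane}. That theorem tells us that for an ellipsoid and a plane the characteristic polynomial has a double root at $0$, so $c_0=c_1=0$ and
\[
\mathfrak{P}(\lambda)=\lambda^2\,q(\lambda),\qquad q(\lambda)=c_4\lambda^2+c_3\lambda+c_2;
\]
thus the two remaining characteristic roots are exactly the roots of $q$. Reading off the normalized polynomial \eqref{eq:char-poly-plane} (and noting that the normalizing change of coordinates multiplies $\mathfrak{P}$ by a positive constant, hence preserves the signs of all $c_i$), one obtains $c_4<0$ and $c_2<0$; in particular $c_2\neq 0$, which is the fact that makes the discriminant computation work cleanly.

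Next I would substitute $c_0=c_1=0$ into the definition of $\Delta_3$. Every term containing $c_0$ or $c_1$ drops out and one is left with
\[
\Delta_3=-4c_4c_2^3+c_2^2c_3^2=c_2^2\bigl(c_3^2-4c_4c_2\bigr).
\]
Since $c_2\neq 0$, the sign of $\Delta_3$ equals the sign of $c_3^2-4c_4c_2$, which is precisely the discriminant of $q$. By Theorem~\ref{th:plane}, $\mathcal{E}$ and $\mathcal{P}$ are in transversal contact if and only if the two nonzero characteristic roots are non-real, i.e.\ if and only if this discriminant is negative. This proves part~(1).

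For part~(2) I would suppose there is no transversal contact, so $q$ has two real roots, with product $c_2/c_4>0$ (hence of a common sign) and sum $-c_3/c_4$. Since $c_4<0$, that common sign is the sign of $c_3$; equivalently, applying Descartes' rule of signs to $q$ (whose leading and constant terms are both negative) shows that $c_3>0$ yields two positive roots, $c_3<0$ yields two negative roots, and $c_3=0$ forces $\lambda^2=-c_2/c_4<0$, i.e.\ non-real roots, which is the excluded contact case. Combining with the last sentence of Theorem~\ref{th:plane}, which identifies the two roots as positive exactly when $\mathcal{E}\subset R_+$ and negative exactly when $\mathcal{E}\subset R_-$, gives the stated equivalences. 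The only place requiring attention is the invariance bookkeeping together with the verification that $c_2\neq 0$; the rest is elementary manipulation of a quadratic, so I do not foresee a genuine obstacle.
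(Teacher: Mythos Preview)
Your argument is correct and follows essentially the same path as the paper: reduce to Theorem~\ref{th:plane} to get the double root at $0$ and the signs $c_4<0$, $c_2<0$, then use the sign of $\Delta_3$ for part~(1) and Descartes' rule for part~(2). The one genuine difference is in how part~(1) is justified. The paper notes $\Delta_4=0$ and then invokes the Complete Discrimination System of \cite{yang} as a black box to conclude that non-real roots correspond to $\Delta_3<0$. You instead substitute $c_0=c_1=0$ directly into $\Delta_3$, obtain $\Delta_3=c_2^2(c_3^2-4c_4c_2)$, and recognize the bracket as the discriminant of the reduced quadratic $q$; since $c_2\neq 0$, the sign of $\Delta_3$ is exactly the sign of that discriminant. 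This is a more elementary and self-contained route that avoids appealing to the general CDS theory, at the cost of a short computation. Your care about invariance (the transformations used in Theorem~\ref{th:plane} multiply $\mathfrak{P}$ by a positive scalar, so the signs of $c_4$ and $c_2$ carry over) is also a point the paper leaves implicit.
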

\begin{proof}
From Theorem~\ref{th:plane} we have that $0$ is a double root, so we always have $\Delta_4=0$. Moreover, transversal contact corresponds to non-real roots so, from the characterization of roots given by a Complete Discrimination System (see, for example, \cite{yang}), it is characterized by $\Delta_3<0$. 

Now, from Theorem~\ref{th:plane}, we also have that the non-zero roots are positive if $\mathcal{E}\subset R_+$ and negative if $\mathcal{E}\subset R_-$. It follows from \eqref{eq:char-poly-plane} that $c_4<0$ and $c_2<0$ so, by the Descartes' rule of signs, two roots are positive if and only if $c_3>0$, whereas two roots are negative if and only if $c_3<0$.
\end{proof}

\begin{remark}\rm
Assuming we have a general plane $a x+b y+c z -d=0$ and an ellipsoid $\mathcal{E}$ with center at $(x_c,y_c,z_c)$ so that they are not in transversal contact, the sign of $a x_c+b y_c+c z_c -d$ determines if $\mathcal{E}$ belongs to $R_+$ or $R_-$. This is equivalent to check the sign of $c_3$ in Assertion (2) of Corollary~\ref{co:plane}, since $c_3=\frac{a x_c + b y_c + c z_c-d}{\alpha^2 \beta^2 \gamma^2}$ for $\mathcal{E}$ with parameters as in \eqref{eq:elipsoid-standar-form}.
\end{remark}

\subsection{Towards applications to real world models}

From the results in the previous section, we propose a simple algorithm to detect contact between an small ellipsoid $\mathcal{E}$ and an object which is modeled by a combination of quadrics. Using planes to divide space in several zones, quadrics can be combined in a simple way (see Figure~\ref{fig:arbore}). First, Theorem~\ref{th:plane} or Corollary~\ref{co:plane} are used to detect in which zone is the ellipsoid. Afterwards, depending on the zone, one can use Theorem~\ref{th:contact} or Corollary~\ref{co:discriminant} to detect contact. 


We use a simple model to describe the algorithm in more detail. In the most simple case there is one separating plane $\mathcal{P}$ dividing space into two zones (see the illustration of an example in Figure~\ref{fig:arbore}): Zone 1 ($R_+$) and Zone 2 ($R_-$). In Zone~1 there is a piece of the quadric surface $S_1$ and in Zone 2 there is a piece of the quadric surface $S_2$. In order to apply the results, the ellipsoid $\mathcal{E}$ shall be small with respect to both of them (this can be checked using Table~\ref{table:smallness-condition-general}). We work with the characteristic polynomials associated to $\mathcal{E}$ and $\mathcal{P}$ ($\mathfrak{P}_0(\lambda)$), to $\mathcal{E}$ and $S_1$ ($\mathfrak{P}_1(\lambda)$) and to $\mathcal{E}$ and $S_2$ ($\mathfrak{P}_2(\lambda)$). The algorithm to detect contact is divided into two steps and described as follows:
\begin{enumerate}
	\item {\it First step.} Using Corollary~\ref{co:plane} we detect if the ellipsoid $\mathcal{E}$ lies in Zone 1, Zone 2 or intersects the separating plane (Zone 0). 
	
	{\bf Data:} matrices associated to $\mathcal{E}$ and $\mathcal{P}$.
	
	{\bf Computations:} characteristic polynomial $\mathfrak{P}_0(\lambda)$. Discriminant $\Delta_3$ of $\mathfrak{P}_0(\lambda)$. Coefficient $c_3$ of $\mathfrak{P}_0(\lambda)$.
	
\begin{algorithm}[h]
	\eIf{$\Delta_3<0$}{$zone=Zone\,0$}{
	\eIf{$c_3>0$}{$zone=Zone\,1$}
	{$zone=Zone\, 2$}}
	\caption{Algorithm to detect the zone for the small ellipsoid}
\end{algorithm}

\item {\it Second step.} Depending on the zone the ellipsoid $\mathcal{E}$ is placed, we detect contact with Surface 1, with Surface 2 or with both of them.

	{\bf Data:} matrices associated to $\mathcal{E}$, $S_1$ and $S_2$.

{\bf Computations:} characteristic polynomials $\mathfrak{P}_1(\lambda)$ and $\mathfrak{P}_2(\lambda)$. Discriminants $\Delta_4^i$ and $\Delta_3^i$ of $\mathfrak{P}_i(\lambda)$, $i=1,2$. 

\begin{algorithm}
\uIf{zone=Zone 1}{
	\eIf{$\Delta_4^1<0$ $\vee$ ($\Delta_4^1=0$ $\wedge$ $\Delta_3^1<0$)}{contact between $\mathcal{E}$ and $S_1$}
	{no contact}}
\uElseIf{zone=Zone 2}{
	\eIf{$\Delta_4^2<0$ $\vee$ ($\Delta_4^2=0$ $\wedge$ $\Delta_3^2<0$)}{contact between $\mathcal{E}$ and $S_2$}
	{no contact}}
\Else{\eIf{$\Delta_4^1<0$ $\vee$ ($\Delta_4^1=0$ $\wedge$ 			$\Delta_3^1<0$) $\vee$ $\Delta_4^2<0$ $\vee$ 		($\Delta_4^2=0$ $\wedge$ $\Delta_3^2<0$)}{contact}
	{no contact}}
\caption{Algorithm to detect contact depending on the zone}
\end{algorithm}
\end{enumerate}

Note that the computations $\mathfrak{P}_1(\lambda)$ and $\mathfrak{P}_2(\lambda)$ (together with the associated discriminants $\Delta_4^i$ and $\Delta_3^i$) shall be computed only if needed, depending on the zone the ellipsoid is placed.

\begin{example}\rm
In order to illustrate theoretical results with an specific simple example we consider a bee and a tree as in Figure~\ref{fig:arbore}. The bee is modeled by an ellipsoid $\mathcal{E}:(x-3)^2+(y-3)^2+3(z-5.5)^2=0.1$, whereas the tree is modeled by an ellipsoid $S_1:x^2+y^2+3(z-8.36291)^2=20$ and a hyperboloid of one sheet $S_2:x^2+y^2-0.25 (z-3)^2=1$ separated by the plane $z=6$. 

{\it Smallness condition.} We check that the ellipsoid $\mathcal{E}$ is small with respect to $S_1$ and $S_2$ using directly Table~\ref{table:smallness-condition-general}. For $\mathcal{E}$ we have $\alpha=\beta=1$, $\gamma=\frac{1}{\sqrt{3}}$ and $\delta=\frac{1}{\sqrt{10}}$. For $S_1$ we have $a_1=b_1=1$, $c_1=\frac{1}{\sqrt{3}}$ and $d_1=\sqrt{20}$; and $a_2=b_2=1$, $c_2=2$ and $d_2=1$. Now, we check that
\[
\begin{array}{rcc}
\mathcal{E}-S_1: & d_1c_1=2.58199 \geq 0.316228= \delta \alpha,& \frac{a_1}{d_1c_1^2}=0.67082 \leq 1.82574=\frac{\gamma}{\delta \alpha^2},\\
\noalign{\smallskip}
\mathcal{E}-S_2: & d_2c_2=1 \geq 0.316228= \delta \alpha, & \frac{a_2}{d_2c_2^2}=0.25 \leq 1.82574=\frac{\gamma}{\delta \alpha^2}.
\end{array}
\]

{\it Detecting the zone.} First we consider the plane  to detect in which zone is the ellipsoid $\mathcal{E}$ located. We apply Algorithm 1. The characteristic polynomial is given by $\mathfrak{P}_0(\lambda)=\lambda ^2 \left(-0.3 \lambda ^2-1.5 \lambda -0.25\right)$, so $\Delta_3=0.121875>0$. Since the third degree coefficient is $c_3=-1.5<0$, we conclude that $\mathcal{E}$ lies in Zone 2.

{\it Detecting contact.} Once we know the ellipsoid $\mathcal{E}$ lies in Zone 2, we detect contact with the surface $S_2$. We apply Algorithm 2. The characteristic polynomial associated to $\mathcal{E}$ and $S_2$ is given by \[\mathfrak{P}_2(\lambda)=-0.3 \lambda ^4+45.7375 \lambda ^3+34.125 \lambda ^2-11.6625
\lambda +0.25,\] so $\Delta^2_4=6.90965\times 10^8>0$. We conclude that there is no contact with the surface. 

We can further detect that the ellipsoid $\mathcal{E}$ is out of the hyperboloid by checking that the coefficients of $\mathfrak{P}_2(\lambda)$ satisfy: $c_3>0$, $c_2>0$ and $c_1<0$, according to Table~\ref{table:rel-pos}.
\end{example}

\begin{figure}
	\includegraphics[width=6cm]{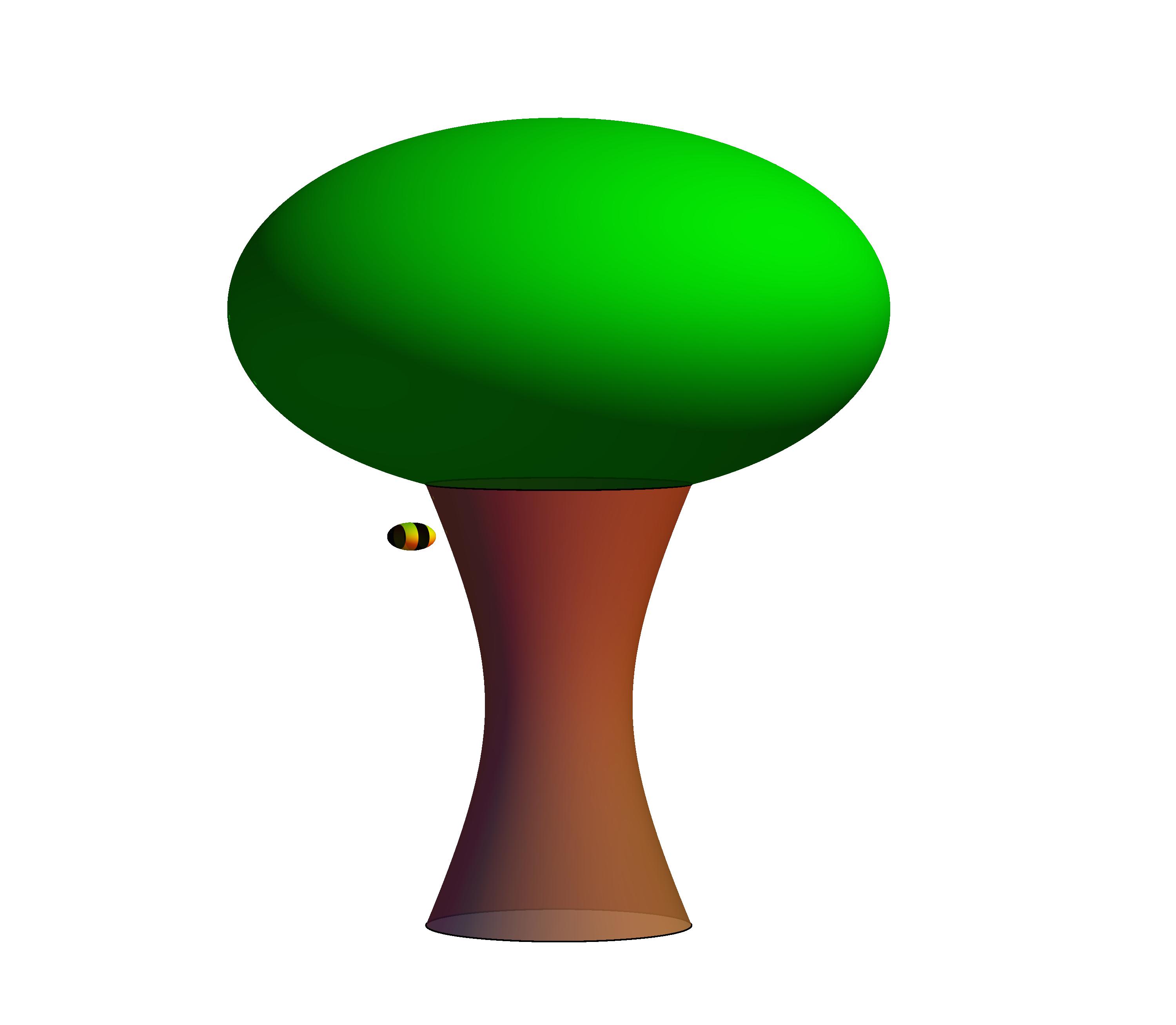}
		\includegraphics[width=6cm]{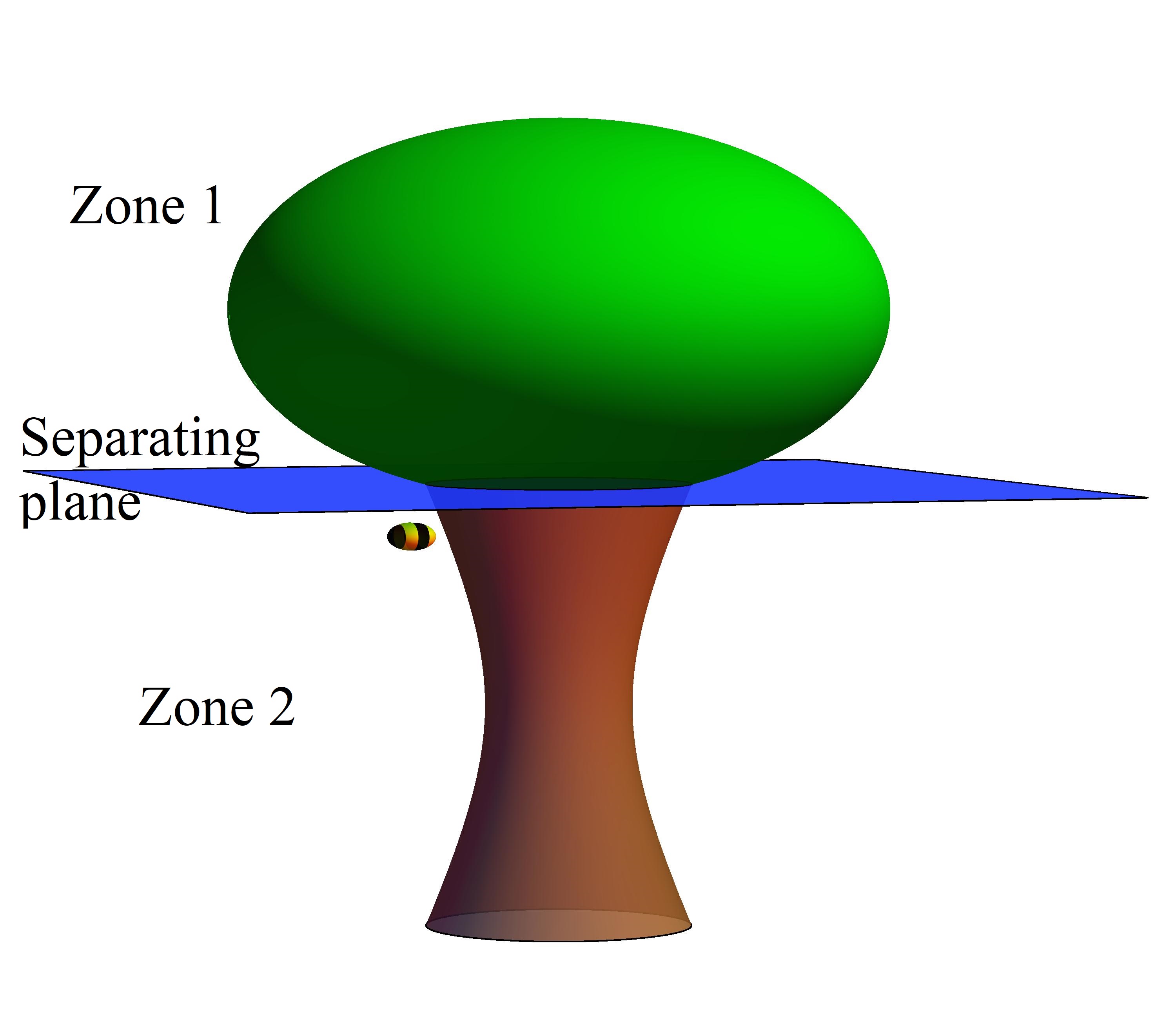}
	\caption{Simple model with a combination of two quadrics separated by a plane.}\label{fig:arbore}
\end{figure}

\section{Conclusions}\label{section:conclusions}
We have focused on the detection of transversal contact and on determining the relative position between two types of bounding volumes: an ellipsoid and all those surrounding objects that can be modeled by a combination of quadric surfaces.

An optimal hypothesis is introduced (see Definition~\ref{smallness-condition}) so that contact is detected by the existence of two non-real roots of the characteristic polynomial (Theorem~\ref{th:contact}). Moreover, that condition can be checked in terms of the parameters of the quadrics involved and contact can be detected by discriminants (Theorem~\ref{th:smallness-condition} and Corollary~\ref{co:discriminant}). Additionally, relative positions can be obtained from the sign of the coefficients of the characteristic polynomial (Theorem~\ref{th:rel-pos}).

We establish a broad theoretical framework that give rise to simple algorithms to detect collisions. Due to their efficiency and computational applicability, they can be used in a continuous time-varying positional contexts.

\end{document}